\documentclass[letterpaper]{article} 
\usepackage{aaai23}  
\usepackage{times}  
\usepackage{helvet}  
\usepackage{courier}  
\usepackage[hyphens]{url}  
\usepackage{graphicx} 
\urlstyle{rm} 
\usepackage{natbib}  
\usepackage{caption} 
\frenchspacing  
\setlength{\pdfpagewidth}{8.5in}  
\setlength{\pdfpageheight}{11in}  
%
\usepackage{algorithm}
\usepackage[noend]{algpseudocode}

\def\onlyBody{0}
\if\onlyBody1
    \nofiles
\fi

%
\usepackage{newfloat}
\usepackage{listings}
\DeclareCaptionStyle{ruled}{labelfont=normalfont,labelsep=colon,strut=off} 
\lstset{%
	basicstyle={\footnotesize\ttfamily},
	numbers=left,numberstyle=\footnotesize,xleftmargin=2em,
	aboveskip=0pt,belowskip=0pt,%
	showstringspaces=false,tabsize=2,breaklines=true}
\floatstyle{ruled}
\newfloat{listing}{tb}{lst}{}
\floatname{listing}{Listing}
%
\pdfinfo{
/TemplateVersion (2023.1)
}

\setcounter{secnumdepth}{2} 

\newcommand{\emdash}{\,---\,}

%


\title{
Representation with
Incomplete Votes}
\author{
    Daniel Halpern,
    Gregory Kehne,
    Ariel D. Procaccia,
    Jamie Tucker-Foltz
    {\normalfont and} Manuel W\"uthrich\footnote{Manuel W\"uthrich was partially funded by the Swiss National Science Foundation (SNSF).}
}
\affiliations{
    Harvard University
}

\usepackage{booktabs}
\usepackage[latin9]{inputenc}
\usepackage{color}
\usepackage{babel}
\usepackage{amsmath}
\usepackage{amsthm}
\usepackage{amssymb}
\usepackage{xcolor}
\definecolor{blueish}{RGB}{103, 135, 176}
\definecolor{reddish}{RGB}{ 205, 102, 7}

\makeatletter
\usepackage{thm-restate}
\makeatother

\usepackage{cleveref}

\newtheorem{thm}{Theorem}[section]

\newtheorem{lem}[thm]{Lemma}
\newtheorem{defn}{Definition}[section]



\usepackage{amsmath}
\usepackage{amssymb}
\usepackage{graphicx}
\usepackage{caption}
\usepackage{amsthm}
\usepackage{color}
\definecolor{green}{rgb}{0,0.5977,0}


\newcommand{\rr}{\mathbb R}


\newcommand{\abs}[1]{\left|{#1}\right|}

\newcommand{\suchthat}{\ | \ }

\newcommand{\tth}{^\text{th}}
\newcommand{\genseq}[3]{{#1}_1 {#3} {#1}_2 {#3} \dots {#3} {#1}_{#2}}
\newcommand{\seq}[2]{\genseq{#1}{#2}{,}}

















\usepackage{enumitem}

\usepackage{mathtools}
\DeclarePairedDelimiter{\set}{\{}{\}}
\DeclarePairedDelimiter{\ceil}{\lceil}{\rceil}

\newcommand{\pavsc}{\textsc{pav-sc}}
\newcommand{\pavd}{\Delta}
\DeclareMathOperator*{\argmax}{arg\,max}

\begin{document}

\maketitle
\begin{abstract}
    Platforms for online civic participation rely heavily on methods for condensing thousands of comments into a relevant handful, based on whether participants agree or disagree with them. These methods should guarantee fair representation of the participants, as their outcomes may affect the health of the conversation and inform impactful downstream decisions. To that end, we draw on the literature on approval-based committee elections. Our setting is novel in that the approval votes are incomplete since participants will typically not vote on all comments. We prove that this complication renders non-adaptive algorithms impractical in terms of the amount of information they must gather. Therefore, we develop an adaptive algorithm that uses information more efficiently by presenting incoming participants with statements that appear promising based on votes by previous participants. We prove that this method satisfies commonly used notions of fair representation, even when participants only vote on a small fraction of comments. Finally, an empirical evaluation using real data shows that the proposed algorithm provides representative outcomes in practice.
\end{abstract}

\section{Introduction}\label{secIntro}

A recent surge of interest in empowering citizens through online civic participation has spurred the development of a number of platforms \cite{Salganik2015-xa, Ito2020-vs, Shibata2019-dq, Fishkin2018-zi, Aragon2017-mp,
    Iandoli2009-do}.
A particularly successful example is \emph{Polis} \cite{SBES+21},\footnote{\texttt{https://pol.is}} an open-source ``system for gathering, analyzing, and understanding what large groups of people think in their own words.'' It has been widely used by local and national government agencies around the world. Most notably, it is the basis of vTaiwan, a system commissioned by the government of Taiwan, whose participatory process\emdash involving thousands of ordinary citizens\emdash
has led to new regulation of ride-sharing services and financial technology.
A similar (albeit commercial) system called \emph{Remesh}\footnote{\texttt{https://www.remesh.ai}} allows users to ``save resources by democratizing insights in live, flexible conversations with up to 1,000 people at the same time.''\footnote{See also \texttt{consider.it}, \texttt{citizens.is}, \texttt{ make.org}, \texttt{kialo.com}.}

The key idea underlying both systems is simple and broadly applicable: Participants can submit free-text comments about the discussion topic at hand and choose to agree or disagree with others' comments presented to them by the platform. An essential part of the process is the aggregation of these opinions toward an ``understanding of what large groups of people think.'' Polis, for instance, displays a list of comments that received the most support among participants to whom they were shown. But this aggregation method may fail to represent minority groups, even those that are very large: if 51\% of participants agree with one set of comments, while 49\% of participants agree with another set of comments, only comments from the first set will appear on this list. Polis has recognized this problem and sought to mitigate it by employing a second, more elaborate procedure~\cite{SBES+21}.\footnote{sThe idea is to find clusters of participants with similar opinions and then ensure that each cluster is represented by comments that distinguish it from the others.
} While this procedure has produced interesting results in practice, it does not guarantee summarizations that are representative of the discussion in a rigorous sense.

In this paper, we reexamine opinion aggregation in systems such as Polis and Remesh through the lens of computational social choice~\cite{BCEL+16}. We observe that \emph{selecting a subset of comments based on agreements and disagreements is equivalent to electing a committee based on approval votes}. From this viewpoint, the primary aggregation method used by Polis corresponds to classical approval voting (AV). There is substantial work\emdash starting with the paper of \citet{ABCE+17}\emdash on \emph{approval-based committee elections}
that seeks to avoid the shortcomings of AV by guaranteeing that the selected committee satisfies fairness notions.
To define one such notion (which is not satisfied by AV), note that if the size of the committee is $k$ and the number of voters is $n$, a subset of $n/k$ voters is large enough to demand a seat on the committee if they agree on at least one candidate. This intuition is captured by a property called \emph{justified representation (JR)}, which guarantees that every such subset of voters has an approved candidate on the committee.

There is a major gap, however, between the literature on approval-based committee elections and the reality of systems like Polis and Remesh: these systems only have access to partial votes. For example, in the discussion facilitated by Polis around ridesharing regulation in Taiwan, 197 comments were submitted, but each participant only cast a vote on 10.57 comments on average\emdash roughly 5\% of all comments. Our main conceptual insight is that we can overcome the partial-information gap via statistical estimation and adaptive querying (i.e., by deciding which comments to show to incoming users based on previous votes).\footnote{There is a body of work in computational social choice related to incomplete votes.
    For instance, some papers aim to find winning committees, given incomplete approval votes, or to fill in the missing votes, given knowledge about the domain of approval profiles~\cite{Imber2022-bt, Terzopoulou_Karpov_Obraztsova_2021, Zhou2019-sp}. However, these papers are primarily concerned with the computational complexity of these problems, while we focus on information-theoretic questions.
    There is also related work that studies the problem of determining the winner given only partial rankings~\citep{XC11,FO14}, but this setting is mathematically different from ours.
    Furthermore, prior work does not consider the adaptive setting, where we query voters sequentially and decide on the next question based on previous votes.
}

\paragraph{Our approach and results.}\label{sec:approach}
In our model, each voter (user) can be asked to express their opinion (approval/disapproval) about at most $t$ candidates (comments). More formally, a \emph{query} asks a randomly-chosen voter for their approval votes on a subset of candidates $S$ of size $|S|\leq t$. Note that this query model is consistent with how Polis works, where participants express their agreement or disagreement with the comments shown to them by the system.
We can view the response to such a query, i.e., the approval votes of a single voter, as noisy information about the profile of the entire population of voters (restricted to these $t$ candidates). Therefore, we refer to these real-world queries as \emph{noisy queries}.

Before we discuss this setting, we consider a simplified setting in \Cref{secExactQueries}, where queries yield the profile of the entire population of voters on the $t$ candidates in the query. While such \emph{exact queries} are not realistic, they provide an abstraction that is easier to study and allows us to derive lower bounds on the number of queries required to achieve JR (which apply also to the noisy-query setting, since it is strictly harder).
We start by studying the required number of queries of \emph{non-adaptive} algorithms, which decide on their queries before any votes are cast. While non-adaptive algorithms may be preferable in some cases (e.g., because no voter can influence what alternatives are shown to other voters or because computation can be performed offline), we show that they are impractical because they must ask at least $\Omega(m^{11})$ queries (and hence voters) to achieve JR, where $m$ is the number of candidates.

Therefore, we focus on \emph{adaptive} algorithms in the rest of the paper. In \Cref{subAdaptive} we adapt a local-search algorithm of \citet{AEHL+18} to the case of exact queries and show that it can achieve JR (and even stronger properties) with $\mathcal{O}(mk^2\log k)$ queries.

In \Cref{secNoisyQueries}, we move on to the realistic, noisy-query model, where a query corresponds to a single voter. Since we need to estimate the answer to each exact query using multiple noisy queries to control uncertainty, the query complexity of the adaptive algorithm for the same guarantees increases to $\mathcal{O}\left(mk^6\log k\log m\right)$. By applying martingale theory, we develop an extension of this algorithm that allows the reuse of votes in a statistically sound way.

In \Cref{secEmpirical} we show empirically (on real datasets from Polis and Reddit) that this extension allows us to find committees satisfying (approximate) JR (and stronger properties) despite access to little information (i.e., few voters, each voting on only a small fraction of the comments).

\section{Preliminaries}\label{secModel}

We begin by introducing the standard approval-based committee selection setting~\citep{ABCE+17}. For $s \in \mathbb{N}$, we use the notation $[s] = \set{1, \ldots, s}$. We have a set $N = [n]$ of $n$ voters and a set $C$ of $m$ candidates. Each voter $i \in N$ approves a set of candidates $A_i \subseteq C$. We refer to the vector $\mathbf{A} = (A_1, \ldots, A_n)$ as an \emph{approval profile}. The goal is to choose a \emph{committee} $W \subseteq C$ of size $k \le m$. The value $k$ is called the \emph{target committee size}. We refer to an algorithm that takes as input the profile and candidates and outputs a committee of size $k$ as a \emph{$k$-committee-selection algorithm}.

\medskip
\noindent\textbf{Notions of representation.}
We say that a group of voters $V \subseteq N$ is \emph{$\ell$-large} if $|V| \ge \ell \cdot \frac{n}{k}$; $V$ is \emph{$\ell$-cohesive} if $|\bigcap_{i \in V} A_i| \ge \ell$. \citet{ABCE+17} introduced the following two fairness notions:

\begin{defn}[Justified Representation (JR)]
    A committee $W$ provides JR if for every $1$-large, $1$-cohesive group of voters $V$, there exists a voter $i \in V$ who approves a member of $W$, i.e., $|A_i \cap W| \ge 1$.
\end{defn}
\begin{defn}[Extended Justified Representation (EJR)]
    A committee $W$ provides EJR if for every $\ell \in [k]$ and every $\ell$-large, $\ell$-cohesive group of voters $V$, there exists a voter $i \in V$ who approves at least $\ell$ members of $W$, i.e., $|A_i \cap W| \ge \ell$.
\end{defn}
We also study the following approximate version of EJR:
\begin{defn}[$\alpha$-Extended Justified Representation ($\alpha$-EJR)]
    A committee $W$ provides $\alpha$-EJR if for every $\ell \in [k]$ and every $\frac{\ell}{\alpha}$-large, $\ell$-cohesive group of voters $V$, there exists a voter $i \in V$ who approves at least $\ell$ members of $W$, i.e., $|A_i \cap W| \ge \ell$.
\end{defn}
\citet{Sanchez-Fernandez2016-tk} proposed another notion of representation
called the \emph{average satisfaction} of a group of voters $V$ for
a committee $W$, defined as $\text{avs}_{W}(V)=\frac{1}{|V|}\sum_{i\in V}|A_{i}\cap W|$.
Related to this quantity, we define the following property:
\begin{defn}[$\alpha$-Optimal Average Satisfaction ($\alpha$-OAS)]
    A committee $W$ provides \emph{$\alpha$-OAS} if for every $\lambda\in[0,k]$
    and every $\frac{\lambda+1}{\alpha}$-large, $\lambda$-cohesive group
    of voters $V$, we have $\text{avs}_{W}(V)\ge\lambda$.
\end{defn}
This property measures how close a committee is to the maximum average satisfaction that can be guaranteed to hold for all elections. To see this, note that the condition above is equivalent to the following condition: for every $\ell\in[\frac{1}{\alpha},\frac{k+1}{\alpha}]$
and every $\ell$-large, $(\alpha\ell-1)$-cohesive group of voters
$V$, we have $\text{avs}_{W}(V)\ge\alpha\ell-1$. This implies
a \emph{proportionality guarantee} \cite{Skowron2021-ys} of $g(\ell,k)=\alpha\ell-1$.
Since there is no selection rule that satisfies a proportionality
guarantee with $g(\ell,k)>\ell-1$ for all elections \cite{AEHL+18, Skowron2021-ys}, $\alpha=1$ is the best we can hope for, so we refer to $1$-OAS simply as OAS.

\medskip
\noindent\textbf{Proportional approval voting.}
\emph{Proportional Approval Voting
    (PAV)} is a widely-studied committee selection algorithm: given an approval profile $\mathbf{A}$ and a committee size $k$, it returns a committee $W$ of size $k$ maximizing the \emph{PAV score},
defined as
\begin{equation*}
    \pavsc(W):=\frac{1}{n}\sum_{i\in N}\sum_{j=1}^{|A_{i}\cap W|}\frac{1}{j}.
\end{equation*}
PAV satisfies EJR and OAS \citep{Sanchez-Fernandez2016-tk,AEHL+18},
but is NP-hard to compute~\citep{Aziz2015-ce}. Consequently, \citet{AEHL+18}
propose a local search approximation of PAV (LS-PAV), which continues to satisfy EJR and OAS, but, unlike PAV, runs in polynomial time.
As we shall see, LS-PAV is a useful basis for algorithms in our query model.

\section{Exact Queries}\label{secExactQueries}
In the exact-query setting,
the response $R$ to a query $Q$ consists of a proportion $p_S$ for every subset $S \subseteq Q$, where $p_S$ is the proportion of voters who only approve the candidates in $S$ among the queried candidates $Q$, i.e.,
\[
    p_S := \frac{1}{n} \: \sum_{i \in N} \mathbb{I}[A_i \cap Q = S],
\]
where $\mathbb{I}$ is the indicator function.
We refer to an algorithm that makes queries of size $t$, receives this type of response, and outputs a committee of size $k$ as a \emph{$(k, t)$-committee selection algorithm with exact queries}. We say an algorithm is \emph{adaptive} if the queries it chooses depend on responses from previous queries. Note that we allow all of our algorithms to be randomized. In the following, we ask how many queries are needed to guarantee the notions of representation introduced in \Cref{secModel}.

\subsection{Nonadaptive Algorithms}\label{subNonadaptive}

In this section, we think of $m$ as large (many comments will be submitted to the system), while we think of $k$ and $t$ as small constants (since we wish to select only a few comments and voters have limited time).
Since we are primarily interested in \emph{lower} bounds on the query complexity of non-adaptive algorithms, we consider only JR, the weakest fairness criterion.

An initial observation is that, if $t\ge k$, JR can always be guaranteed with $O(m^k)$ queries, as simply querying every set of $k$ candidates provides all the information necessary to run PAV. For $k = 1$, this bound is tight, as voters could all approve only a single candidate, which will take a linear number of queries to find. Our first result is a tight quadratic lower bound for $k = 2$.

\begin{restatable}{thm}{ExactLowerSquare}\label{thmExactLowerBoundN2}
    For any constants $k$ and $t$ such that $k \geq 2$, and any $\varepsilon > 0$, any non-adaptive $(k, t)$-committee selection algorithm that makes fewer than $\Omega(m^2)$ queries satisfies JR with probability at most $\varepsilon$.
\end{restatable}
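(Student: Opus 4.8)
The plan is to exhibit a family of hard instances on which any non-adaptive algorithm making $o(m^2)$ queries fails JR with high probability, using a counting/probabilistic argument. The core idea: hide a small "secret" structure that can only be detected by querying a specific pair (or small tuple) of candidates together, and argue that a non-adaptive algorithm, whose queries are fixed in advance, covers only $O(m^2)$ of the $\binom{m}{2}$ pairs — so if the secret pair is chosen adversarially (or uniformly at random) among $\Theta(m^2)$ equally-likely candidates, the algorithm misses it with probability close to $1$.

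First I would reduce to $k = 2$, $t = 2$: a lower bound here implies the general statement, since an algorithm with larger $t$ querying a set $S$ reveals no more about any particular pair $\{a,b\} \subseteq S$ than the size-$2$ query $\{a,b\}$ would in the worst case over the hidden structure — more carefully, one embeds the $k=2$ instance into a larger $k$ by padding with $k-2$ "universally approved" dummy candidates that every voter approves, so any committee must essentially still solve the $k=2$ problem on the remaining candidates, and queries of size $t$ touch at most $\binom{t}{2}$ relevant pairs, changing only constants. Next I would construct the instance distribution: fix a special pair $\{c_1, c_2\}$ chosen uniformly at random from all $\binom{m}{2}$ pairs; let a $1$-cohesive group $V$ of exactly $n/2$ voters all approve only $\{c_1\}$ — wait, for JR with $k=2$ we need a $1$-large $1$-cohesive group, i.e. $n/k = n/2$ voters agreeing on one candidate. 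The construction should be: a block $V$ of $n/2$ voters who approve only candidate $c_1$, and the remaining $n/2$ voters who approve only candidate $c_2$; then JR forces $\{c_1, c_2\} \subseteq W$ (the only $2$-committee satisfying JR). The subtlety is making $c_1, c_2$ individually indistinguishable from the other $m-2$ candidates, so that the algorithm cannot identify them from single-candidate marginals — this requires that \emph{every} candidate looks, marginally, like it's approved by exactly $n/2$ voters. So I would instead pair up all $m$ candidates into $m/2$ disjoint pairs, each pair $\{a_j, b_j\}$ "owned" by a distinct block $V_j$ of $n \cdot (2/m)$... no — simpler: use one random hidden pair and add noise. Let me restructure: put the $n/2$-$n/2$ split on the hidden pair $\{c_1,c_2\}$, and for every other candidate $c$, independently have a random $n/2$-sized subset of voters approve $c$ (in addition). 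Then each candidate has marginal approval $n/2$, and only a query containing both $c_1$ and $c_2$ reveals the perfect anticorrelation (disjoint supports summing to all $n$) that pins down the JR committee; any other pair $\{c_1, c'\}$ or $\{c', c''\}$ shows the generic $\approx n/4$ overlap.

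The main steps in order: (1) Fix the reduction to $(k,t) = (2,2)$ via dummy candidates, absorbing $t$ into constants. (2) Define the random hard instance with hidden pair $\{c_1,c_2\}$ uniform over $\binom{m}{2}$ choices, so that the $\Theta(m^2)$-sized set of "critical pairs" each occurs with probability $\Theta(1/m^2)$. (3) Show that \emph{conditioned on the responses to any fixed set of $o(m^2)$ size-$2$ queries that does not include $\{c_1,c_2\}$}, the posterior over which pair is hidden is still spread over $\Omega(m^2)$ candidates — this is where I'd use a coupling/symmetry argument: the instance distribution is invariant under relabeling candidates, the queries are fixed, so the response distribution is (nearly) the same whether or not any particular unqueried pair is the hidden one. (4) Conclude the algorithm's output $W$ (a function of responses and internal randomness, independent of the hidden pair given those responses) contains $\{c_1,c_2\}$ with probability $O(1/m) = o(1)$, hence violates JR with probability $\geq 1 - \varepsilon$ for $m$ large. (5) Note tightness: $O(m^2)$ queries over all pairs suffice, matching the bound — though tightness is not needed for the theorem as stated, only the $\Omega(m^2)$ lower bound.

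The hard part will be step (3): making precise that the algorithm learns essentially nothing about the hidden pair from its $o(m^2)$ queries. The danger is that even queries \emph{not} containing $\{c_1,c_2\}$ leak information — e.g. if the "noise" approvals on other candidates are correlated with the hidden pair, or if seeing many candidates with clean $n/4$ overlaps lets one triangulate. I would neutralize this by making the noise approvals on candidates outside $\{c_1,c_2\}$ fully independent of the identity of the hidden pair (each such candidate gets an i.i.d. uniform $n/2$-support), and by making $c_1, c_2$'s own supports ($V$ and $N \setminus V$, with $V$ a uniform $n/2$-set) \emph{also} marginally identical to the noise candidates when viewed one at a time or paired with a noise candidate. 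Then a clean symmetry argument shows the joint distribution of responses to any pair-query not equal to $\{c_1,c_2\}$ is literally identical across all choices of hidden pair avoiding that query, so by a union bound over the $o(m^2)$ queries the algorithm's transcript is statistically independent of which of the remaining $\Omega(m^2)$ pairs is hidden — and a Fano-type or direct argument finishes it. A secondary technical point is handling randomized algorithms and the $\varepsilon$: fix the algorithm's coins first, apply the above to the resulting deterministic query set, then average — since the bound holds for every coin setting, it holds in expectation.
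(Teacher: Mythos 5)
Your high-level skeleton (hide a structure detectable only by querying a specific small tuple together, count that $o(m^2)$ fixed queries miss it with high probability, then use a symmetry argument to show the algorithm cannot locate the forced candidate) is exactly the paper's strategy. But your concrete instance breaks at the crucial point. Once you add the independent ``noise'' approvals\,---\,each candidate $c \notin \{c_1,c_2\}$ approved, in addition, by an i.i.d.\ uniform $n/2$-subset of voters\,---\,the instance stops being hard for JR at all. JR only requires that for each $1$-large, $1$-cohesive group $V$, \emph{some} voter in $V$ approves \emph{some} committee member. Take $W = \{c', c''\}$ for two arbitrary noise candidates: the support of $c'$ is a uniform $n/2$-set, so with probability $1 - \binom{n/2}{n/2}/\binom{n}{n/2}$ it intersects the block of $c_1$-only voters, and the same holds for every other cohesive group in the instance (all of size $n/2$). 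So with overwhelming probability \emph{every} size-$2$ committee satisfies JR, and the adversary proves nothing. The tension you correctly identified\,---\,$c_1$ must be marginally indistinguishable from the decoys, yet the $c_1$-block must be satisfiable \emph{only} by $c_1$\,---\,cannot be resolved by independent noise, because any candidate whose support touches the $c_1$-block covers that block. The paper's resolution is to make the decoys' approvals perfectly correlated rather than independent: half the voters approve only $c_1$, and the other half approve exactly the set $\{c_2,\dots,c_\ell\}$ of decoys, where $\{c_1,\dots,c_\ell\}$ is a hidden $\ell$-set planted among the $m$ candidates. Every planted candidate then has marginal exactly $1/2$, no voter outside the second block approves any decoy, JR still forces $c_1 \in W$, and only a query containing two planted candidates breaks the symmetry.

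Two smaller issues. First, your reduction to general $k$ via $k-2$ universally approved dummies also fails: a universally approved candidate is approved by every voter, so any committee containing one dummy satisfies JR vacuously. The paper instead pads with $r$ extra candidates each approved by a \emph{disjoint} $1/k$-fraction of voters (forcing $r$ committee slots) and runs $\lfloor k/2\rfloor$ independent copies of the planted gadget. Second, your claimed success probability $O(1/m)$ is too optimistic even for a corrected construction: the algorithm only needs to place $c_1$ among the $k$ slots it allocates to the $\ell$ mutually symmetric planted candidates, so it succeeds with probability $\Theta(k/\ell)$, not $O(1/m)$. This is precisely why the paper takes $\ell = \Theta(1/\varepsilon)$ to push the success probability below $\varepsilon$, and why the theorem is stated for an arbitrary constant $\varepsilon$ rather than with a $o(1)$ bound.
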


This result provides a separation between the non-adaptive and the adaptive settings: In Section~\ref{subAdaptive}, we discuss an adaptive $(k, t)$-committee selection algorithm guaranteeing JR with only $O(m)$ queries for any $k$ and $t$ such that $k < t$.

Theorem~\ref{thmExactLowerBoundN2} follows from a more general result that we present formally in Appendix~\ref{appComputationalSearch}. Here, we illustrate the argument by considering the special case where $t = k = 2$ and $\varepsilon = \frac{5}{6}$: Consider an adversary
that picks a random set of 3 candidates, call them 1, 2, and 3, and answers queries according to the approval matrix visualized in Figure~\ref{figMergedCounterexamples}(a): half of the voters approve only candidate 1, and the other half of the voters approve only candidates 2 and 3. To satisfy JR, the algorithm needs to include candidate 1 in the committee. However, if the algorithm never queries $\{1, 2\}$, $\{1, 3\}$, or $\{2, 3\}$, it receives no information that can distinguish candidates 1, 2, and 3 from each other, so it can do no better than selecting a random pair from these three candidates, which will succeed with probability $\frac{2}{3}$.
This problematic case will occur frequently if the number of queries is not very large, say $\frac{1}{18} \cdot \binom{m}{2}$: Since there are $\binom{m}{2}$ pairs of candidates, the probability that the algorithm queries any randomly selected pair of candidates is at most $\frac{1}{18}$. By the union bound, the probability that the algorithm queries any of $\{1, 2\}$, $\{1, 3\}$, or $\{2, 3\}$ is at most $3\cdot \frac{1}{18} = \frac{1}{6}$. To summarize, for the algorithm to succeed, it either needs to get lucky during the querying phase, which happens with probability at most $\frac{1}{6}$, or during the selection phase, which happens with probability at most $\frac{2}{3}$. By the union bound, the algorithm succeeds with probability at most $\frac{1}{6} + \frac{2}{3} = \frac{5}{6}$.

A natural follow-up question is whether the $O(m^k)$ upper bound is tight for larger $k$. Interestingly, this is \emph{not} the case for $k \geq 3$, as we prove in Appendix~\ref{appT2K3JR}:
\begin{restatable}{thm}{NonadaptiveJR}\label{thmNonadaptiveAlgorithmKEqualsTPlust2}
    For any $t \geq \frac{2}{3} k$, there exists a $(k, t)$-committee selection algorithm guaranteeing JR with $O(m^{t})$ exact queries.
\end{restatable}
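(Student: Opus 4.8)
The plan is to have the algorithm query \emph{every} subset of $t$ candidates; since there are $\binom{m}{t} = \mathcal{O}(m^t)$ such subsets, this stays within budget. From these responses the algorithm recovers, for every candidate set $S$ with $|S| \le t-1$ and every $c \notin S$, the exact count $g_S(c) := |\{i \in N : c \in A_i,\ A_i \cap S = \emptyset\}|$ (marginalize the response to any $t$-query containing $S \cup \{c\}$). The key reduction is the standard fact that a committee $W$ satisfies JR if and only if $f_W(c) := |\{i : c \in A_i,\ A_i \cap W = \emptyset\}| < n/k$ for every candidate $c$, together with the monotonicity $f_W(c) \le g_S(c)$ whenever $S \subseteq W$. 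Call $W$ \emph{certified} if for every $c \notin W$ there is some $S \subseteq W$ with $|S| \le t-1$ and $g_S(c) < n/k$; then a certified committee necessarily satisfies JR, and ``certified'' is checkable purely from the gathered data. The algorithm searches over all $\binom{m}{k}$ committees and outputs a certified one, so it remains to prove that a certified committee always exists when $t \ge \tfrac23 k$.

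For existence I would consider the committee $W = \{c_1,\dots,c_k\}$ produced by the greedy JR rule (GreedyAV): repeatedly add the candidate covering the most as-yet-uncovered voters; let $x_j$ be the number of voters newly covered at step $j$, so $x_1 \ge \dots \ge x_k$, these voter sets are pairwise disjoint, and $\sum_j x_j \le n$. (A short self-contained argument shows this $W$ satisfies JR, which I would include.) The easy case is when $x_j < n/k$ for some $j \le t$: taking the least such $j$, the prefix $S = \{c_1,\dots,c_{j-1}\}$ has $|S| \le t-1$, and since $c_j$ was a maximizer at step $j$ we get $g_S(c) \le x_j < n/k$ for \emph{every} $c$; so $S$ is a uniform certificate and $W$ is certified.

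The remaining case---the main obstacle---is $x_1 \ge \dots \ge x_t \ge n/k$. Then $\sum_{j \le t} x_j \ge t\cdot n/k \ge \tfrac23 n$, so $\{c_1,\dots,c_t\}$ already covers at least $\tfrac23 n$ voters and hence $f_W(c) \le \tfrac13 n$ for all $c$. Fixing a problematic $c \notin W$, I would classify the voters approving $c$ by their footprint $A_i \cap W$; a size-$(t-1)$ certificate corresponds to deleting a set $D \subseteq W$ with $|D| = k-t+1$, and $g_{W\setminus D}(c)$ is exactly the number of $c$-approvers whose footprint is contained in $D$. The goal is to bound this quantity below $n/k$ for \emph{some} such $D$, by combining (i) $f_W(c) < n/k$, (ii) the greedy inequalities $|\{i : c \in A_i,\ A_i \cap \{c_1,\dots,c_{\ell-1}\} = \emptyset\}| \le x_\ell$ for all $\ell$ together with the disjointness and $\sum_j x_j \le n$, and (iii) the hypothesis $t \ge \tfrac23 k$, which is precisely what keeps $k-t+1 \le \tfrac{k}{3}+1$ small relative to $t$. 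The delicate point is that the naive choice $D = \{c_t,\dots,c_k\}$ only gives $g_{W\setminus D}(c) \le x_t$, which here is $\ge n/k$; so one must choose $D$ adaptively to $c$'s footprint distribution. I expect the argument to reduce, via double counting over the possible certificates (``if every size-$(t-1)$ subset fails, then the footprint masses add up to more than $n$, contradicting $\sum_j x_j \le n$ and $T_c \le x_1$'', where $T_c$ is the number of voters approving $c$), to exactly the inequality $t \ge \tfrac23 k$. The small cases $t \le 3$ and the degenerate subcase $x_1 = \dots = x_k = n/k$ (every voter covered) are handled separately by elementary counting.
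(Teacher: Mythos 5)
Your reduction (query all $\binom{m}{t}$ sets, then output a committee $W$ such that every $c \notin W$ has a witness $S \subseteq W$, $|S| \le t-1$, with $g_S(c) < n/k$) is sound as far as it goes, and your Case~1 is correct. But the heart of the proof is Case~2, which you leave as an expectation rather than an argument, and in fact the statement you would need there is \emph{false}: a ``certified'' committee need not exist when $t \ge \tfrac{2}{3}k$. Take $k=3$, $t=2$, $n=6$, four candidates $\{1,2,3,4\}$ (plus dummies approved by nobody), and one voter for each of the six pairs $\{i,j\} \subseteq \{1,2,3,4\}$, approving exactly that pair. Here $n/k = 2$. Any size-$3$ committee $W$ omits some $c \in \{1,2,3,4\}$; $c$ has $3 \ge 2$ approvers, and for every single $w \in W$ the number of $c$-approvers not approving $w$ is either $3$ ($w$ a dummy) or exactly $2$ ($w$ a main candidate), never $< 2$. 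So no committee admits size-$(t-1)$ certificates, even though, e.g., $\{1,2,3\}$ satisfies JR. In particular your proposed witness, the GreedyAV committee, lands in your Case~2 ($x_1 = 3, x_2 = 2 \ge n/k$) and is not certified, and no adaptive choice of the deletion set $D$ can fix this. The root issue is that with queries of size $t < k$ one generally cannot \emph{upper bound} $f_W(c)$ for a specific size-$k$ committee by a single small marginal; per-candidate certificates are a strictly stronger demand than JR, and they fail exactly at the boundary $t = \tfrac{2}{3}k$ that the theorem must cover.

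The paper's proof sidesteps this by certifying \emph{coverage} rather than bounding $f_W(c)$. It runs greedy (approval winner among still-unsatisfied voters) only for the first $t$ rounds, and for rounds $i = t+1,\dots,\lfloor \tfrac{3}{2}t\rfloor$ it selects any candidate $c_i$ passing two threshold tests, each involving at most $t$ candidates and hence checkable from the queries: $|A| \ge n/k$ where $A$ is the set of voters approving $c_i$ but none of $c_1,\dots,c_{t-1}$, and $|B| \ge n/k$ where $B$ is defined relative to $\{c_1,\dots,c_{i-1}\}\setminus\{c_{\lfloor t/2\rfloor},\dots,c_{t-1}\}$. Since the two excluded prefixes jointly cover all previously chosen candidates, $A \cap B$ is exactly the set of newly satisfied voters, and since both exclude $c_1,\dots,c_{\lfloor t/2\rfloor-1}$, one has $|A \cup B| \le n\,x_{\lfloor t/2\rfloor}$ by greedy optimality at round $\lfloor t/2\rfloor$; inclusion--exclusion then gives at least $\tfrac{2}{k} - x_{\lfloor t/2\rfloor}$ newly satisfied voters per late round (and if no candidate passes the tests, or some $x_i < \tfrac1k$, JR already holds). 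Summing over $k$ rounds and using $t \ge \tfrac{2}{3}k$ shows \emph{all} voters are satisfied by round $k$, which implies JR. So the fix you need is not a cleverer choice of $D$ in your certificate framework, but this different kind of argument in which two overlapping size-$\le t$ witness sets are combined to lower-bound satisfied voters.
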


However, the exponent does have a dependence on $k$. In particular, we find that guaranteeing JR requires $\Omega(m^3)$ queries starting at $k = 6$. The adversary employs an analogous strategy, now picking 7 random candidates and imposing the approval matrix depicted in Figure~\ref{figMergedCounterexamples}(b). Satisfying JR requires that the algorithm include candidate 1, which is indistinguishable from the other six candidates unless the algorithm makes $\Omega(m^3)$ queries, since every candidate is approved by $\frac{6}{18}$ of the voters and every pair of candidates is approved by $\frac{2}{18}$ of the voters.

\begin{figure}[t]
    \begin{center}\includegraphics[scale = .3]{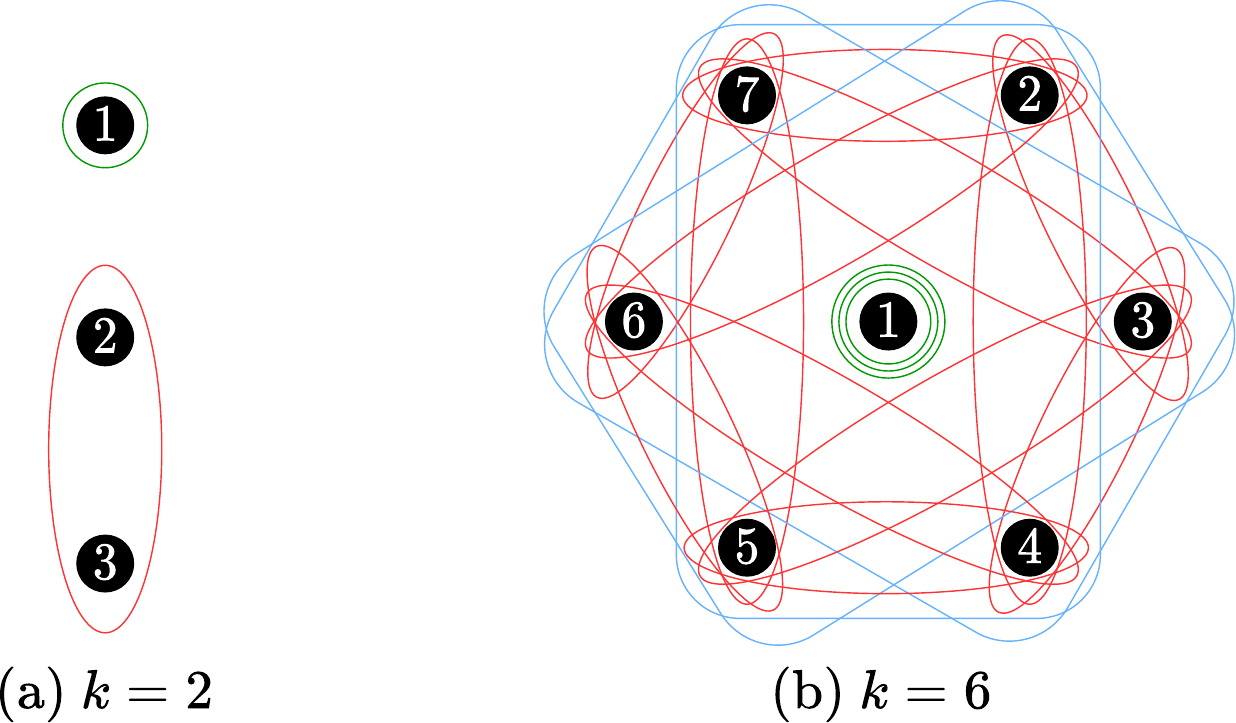}
        \caption{
            \label{figMergedCounterexamples} Adversarial approval matrices. Each region represents a disjoint, equally-sized set of voters who approve only the candidates within the region. In (a), queries of size $t \geq 2$ are needed to distinguish the candidates; in (b), we need $t \geq 3$.
        }
    \end{center}
\end{figure}

In Appendix~\ref{appComputationalSearch}, we describe a computational search we conducted to find similar instances for larger values of $k$. The best lower bound obtained is as follows.

\begin{restatable}{thm}{ExactLowerNEleven}\label{thmExactLowerBoundN11}
    For any $\varepsilon > 0$, there exists a target committee size $k$ with $k = \Theta(\log 1/\varepsilon)$ such that for all $t$, any non-adaptive $(k, t)$-committee selection algorithm with exact queries that makes fewer than $\Omega(m^{11})$ queries satisfies JR with probability at most $\varepsilon$.
\end{restatable}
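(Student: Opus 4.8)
The plan is to prove \Cref{thmExactLowerBoundN11} as one more instantiation of the general ``gadget‑embedding'' lower‑bound engine behind \Cref{thmExactLowerBoundN2}. Fix $\varepsilon > 0$. We will produce a constant number $b$ of ``gadget candidates,'' a target size $k = \Theta(\log 1/\varepsilon)$, and a fixed approval profile $G$ on those $b$ candidates (with target size $k$) built as follows: $G$ contains fewer than $k$ ``blocks,'' where block $j$ holds a $1$‑large, $1$‑cohesive group of voters approving exactly the single candidate $w_j$, together with additional ``filler'' voters chosen so that every approval statistic of $G$ supported on at most $10$ gadget candidates is invariant under a group $\Gamma$ of permutations of the $b$ gadget candidates (the set of such permutations is closed under composition, hence automatically a group), and such that $\Gamma$ moves the JR‑forced set $W^\star := \{w_1, w_2, \dots\}$ through an orbit of size at least $1/\varepsilon$. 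Since JR forces $W^\star \subseteq W$ and $|W| = k$, a committee provides JR only if it contains $W^\star$. The adversary then samples a uniformly random injection $\pi$ of the $b$ gadget candidates into the $m$ real candidates together with a uniformly random $\gamma \in \Gamma$, answers every query according to the profile $\pi(\gamma(G))$ (with no voter approving any of the remaining $m - b$ candidates), and the JR‑relevant constraint is that the output committee contain $\pi(\gamma(W^\star))$.

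With the adversary fixed, the argument splits exactly as in the $t = k = 2$ sketch in the text. Call a query $Q_i$ \emph{revealing} if $|Q_i \cap \pi([b])| \ge 11$. A query of size at most $t$ contains at most $\binom{t}{11}$ subsets of size $11$, and a fixed $11$‑set lands inside the random image $\pi([b])$ with probability $\binom{m-11}{b-11}/\binom{m}{b} = O(m^{-11})$ since $b$ is constant; a union bound over the $q$ queries gives $\Pr[\text{some query is revealing}] \le q\binom{t}{11}\cdot O(m^{-11}) = o(1)$ whenever $q = o(m^{11})$ (for fixed $t$, with the hidden constant depending on $t$ and $\varepsilon$). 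Conditioned on no revealing query, the entire transcript is a function only of statistics of $G$ on at most $10$ gadget candidates, so by the $\Gamma$‑invariance of those statistics its distribution does not depend on $\gamma$ at all; but the correct committee $\pi(\gamma(W^\star))$ is uniformly distributed over at least $1/\varepsilon$ distinct sets as $\gamma$ ranges over $\Gamma$, so the (randomized) output coincides with it with probability at most $\varepsilon$. Combining, the algorithm provides JR with probability at most $\varepsilon + o(1)$; running the whole construction with $\varepsilon/2$ in place of $\varepsilon$ and taking $m$ large gives the claimed bound $\varepsilon$.

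Everything above is routine; the one substantive ingredient — and the step I expect to be the real obstacle — is the existence of such a gadget $G$ (with its group $\Gamma$), simultaneously reaching symmetry order $10$ and forcing $k = \Theta(\log 1/\varepsilon)$. The scaling is genuinely constrained: within a single lonely‑group block the symmetry order is at most the block's share of $k$ (the lonely group has voter‑proportion at least $\approx 1/k$, so its candidate's ``approve‑exactly‑me'' statistic inside any size‑$s$ query is at least $\approx 1/k$, and those statistics are equal and their events disjoint across the $s$ queried candidates, forcing $s \lesssim k$), while driving the guessing probability down to an arbitrary $\varepsilon$ requires composing $\Theta(\log 1/\varepsilon)$ essentially independent constant‑size blocks — each only shrinking the success probability by a constant factor, and $\Theta(\log 1/\varepsilon)$ of them also sufficing — which pins $k = \Theta(\log 1/\varepsilon)$. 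Concretely realizing a profile whose JR‑forced candidates are invisible to all $10$‑candidate statistics yet shuffled by $\ge 1/\varepsilon$ symmetry‑preserving relabelings is precisely the combinatorial object sought by the computational search in Appendix~\ref{appComputationalSearch}; a full proof would display that profile (and its composition from base blocks), verify the two properties above, and invoke the search only for how the object was found and for the hand‑built analogues behind the $\Omega(m^2)$ and $\Omega(m^3)$ bounds.
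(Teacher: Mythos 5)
Your framework is essentially the paper's: embed a constant-size (in $m$) gadget at random locations, argue that any query hitting at most $10$ gadget candidates yields statistics invariant under a symmetry, count that avoiding this requires $\Omega(m^{11})$ queries, and amplify by composing $\Theta(\log 1/\varepsilon)$ independent blocks. But the proposal has a genuine gap, and you name it yourself: you never exhibit the gadget, and that object is the entire substance of the theorem. The paper's proof reduces everything to \Cref{lemLowerBounds}, whose hypothesis is a distribution $\{x_S\}_{S\subseteq[\ell]}$ whose marginals on all sets of size at most $h$ are equal (condition (1)) and which places mass at least $1/k_0$ on a singleton $\{s^*\}$ (condition (2)); the $\Omega(m^{11})$ bound then requires such a distribution with $h=10$, and the paper obtains one only via an LP feasibility search, producing an explicit point of $P(10,72,73)$ (Table~\ref{tabM11Solution}) with $k_0=72$, $\ell=73$. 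No hand construction is known for $h>2$, and your heuristic scaling discussion (lonely-group proportion $\approx 1/k$ forces symmetry order $\lesssim$ block share of $k$) is not a proof of existence; without the explicit certificate the claimed exponent $11$ and the constant in $k=\Theta(\log 1/\varepsilon)$ are unsupported. Deferring to ``invoke the search'' is exactly the step a complete proof must supply.

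A second, smaller gap is in your guessing step. You conclude from ``the orbit of $W^\star$ under $\Gamma$ has size at least $1/\varepsilon$'' that the algorithm succeeds with probability at most $\varepsilon$, but JR only requires the size-$k$ output to \emph{contain} the forced candidates, so the algorithm can hedge: a single committee can cover many orbit elements if they overlap, and $k$ is much larger than the number of forced candidates. The paper handles this with a per-party budget argument: if the algorithm places $k_i$ candidates in party $i$, then since it must also spend $r$ seats on the $Q_j$ parties, $\sum_i k_i \le p k_0$, and the success probability is at most $\prod_i \min(k_i/\ell,1) \le (k_0/\ell)^p$. Your blocks-are-independent intuition points in this direction, but as stated the reduction from orbit size to failure probability does not go through and needs this counting (or an equivalent posterior argument) to be made explicit.
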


This theorem closes the book on the (im)practicality of non-adaptive committee selection algorithms. We therefore turn our attention to adaptive algorithms.

\subsection{An Efficient Adaptive Algorithm}\label{subAdaptive}

In this section, we propose an adaptive algorithm based on LS-PAV~\citep{AEHL+18}, and we show that it achieves EJR and OAS with a practically-feasible number of queries.

For convenience, we introduce the following notation:
For a committee $W$ and candidates $c \in W$ and $c' \notin W$, let
$$\pavd(W, c', c) := \pavsc(W \cup \set{c'} \setminus \set{c}) - \pavsc(W)$$
denote the difference in PAV score obtained by replacing $c$ with $c'$ in $W$. Additionally, let
$$\pavd(W, c) := \pavsc(W \cup \set{c}) - \pavsc(W)$$
denote the marginal gain in PAV score by adding $c$ to $W$.

LS-PAV starts with an arbitrary committee $W$ and repeatedly replaces
a committee member $c\in W$ with a candidate $c'\notin W$, provided the improvement
to the PAV score satisfies $\pavd(W,c',c)\ge\frac{1}{k^{2}}$. \citet{AEHL+18}
show that after at most $\mathcal{O}(k^{2}\log k)$ swaps, no such swap pairs $c, c'$ remain, at which point $W$ satisfies OAS and EJR.

We first observe that LS-PAV can be implemented using exact queries: For any set of candidates $S$, $\pavsc(S)$ can be computed using any query $Q \supseteq S$, as it is sufficient to know the proportion of voters that approve each subset of $S$. Hence, for any $W$, $c \in W$, and $c' \notin W$, $\pavd(W, c', c)$ can be computed using a query $Q$ that contains both $W$ and $c'$. Using $\ceil*{ \frac{m-k}{t-k}}$
queries of size $t$, we can cover all $m-k$ candidates that are not in $W$, which leads to an overall query complexity of $\mathcal{O}\left(m k^{2}\log k\right)$.

We next present a version of LS-PAV, which we call $\alpha$-PAV (\Cref{alg:a-pav}),
that has the same query complexity as LS-PAV for finding a committee
that satisfies EJR and OAS, but lower query complexity
for approximate ($\alpha<1$) $\alpha$-EJR and $\alpha$-OAS.
\begin{algorithm}[tb]
    \caption{$(k,t)$-$\alpha$-PAV}
    \label{alg:a-pav}
    \begin{algorithmic}[1] 
        \State Choose $W \in \binom{C}{k}$, $c \in W$, and $c' \notin W$ arbitrarily
        \State $\gamma \gets \infty$
        \While{$\gamma \geq \frac{1}{\alpha k}$}
        \State $W \gets W\cup \set{c'} \setminus \set{c}$
        \State Choose $\mathcal{Q} = \{Q_i\}_i$, with $|Q_i|=t$, s.t. $W \subseteq\bigcap\mathcal{Q}$ \par
        \par\hspace{-0.1in} and $C \subseteq \bigcup\mathcal{Q}$
        \State $c' \gets \argmax_{x\notin W}\pavd(W,x)$ \Comment{(using $\mathcal{Q}$)}
        \State $c \gets \argmax_{x\in W}\pavd(W,c',x)$ \Comment{(using $\mathcal{Q}$)}
        \State $\gamma \gets \pavd(W,c')$
        \EndWhile
        \State \Return $W$
    \end{algorithmic}
\end{algorithm}
Besides introducing the approximation parameter $\alpha$, we make two other modifications to LS-PAV:
First, \Cref{alg:a-pav} terminates when there is no candidate $c'$ such that $\pavd(W,c')\ge \frac{1}{k}$ (for $\alpha=1$), while LS-PAV terminates when there is no pair $c, c'$ such that $\pavd(W,c', c)\ge \frac{1}{k^2}$.  As we shall see in \Cref{lemPotentialIncrease}, the termination condition of \Cref{alg:a-pav} is weaker than that of LS-PAV, implying that it may terminate earlier.
Second, instead of considering all possible swaps $c,c'$, we only consider adding the candidate
$c'$ with the largest $\pavd(W,c')$. This modification makes the algorithm slightly simpler and more computationally efficient
(by a factor of $k$).

\begin{restatable}{thm}{GlsPavGuarantees}\label{thmExactQueryUpper}
    \label{thm:a-pav_guarantees}
    For any $m \ge t > k$,
    \Cref{alg:a-pav} yields a committee satisfying
    $\alpha$-OAS and $\alpha$-EJR while making at most
    \begin{equation*}
        \left\lceil \frac{m-k}{t-k}\right\rceil \: \frac{\alpha k^{2}}{(1-\alpha)k+1} \: H_k
    \end{equation*}
    queries, where $H_k$ is the $k\textsuperscript{th}$ harmonic number.
    For ${\alpha = 1}$, this leads to a query complexity of $\mathcal{O}\left(m k^{2}\log k\right)$ while for any fixed $\alpha < 1$, this leads to a query complexity of $ \mathcal{O}\left(m k\log k\right)$.
\end{restatable}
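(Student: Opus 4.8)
The plan is to establish two things: (i) that when Algorithm \ref{alg:a-pav} terminates, the output committee $W$ satisfies $\alpha$-OAS and $\alpha$-EJR, and (ii) that the number of iterations of the while-loop (and hence the query count, after multiplying by the $\lceil (m-k)/(t-k)\rceil$ queries needed per iteration to cover $C$ while containing $W$) is at most $\frac{\alpha k^2}{(1-\alpha)k+1}H_k$. I would handle these as two essentially independent arguments.

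For the correctness part (i), I would first observe that the termination condition is exactly: no candidate $c' \notin W$ has $\pavd(W,c') \ge \frac{1}{\alpha k}$, i.e. every single-candidate addition gains less than $\frac{1}{\alpha k}$ in PAV score. The key lemma to invoke or prove is a relationship between the marginal PAV gain $\pavd(W,c')$ and the average satisfaction of cohesive groups — specifically, a counting argument (following the style of \citet{AEHL+18}) showing that if some $\ell/\alpha$-large, $\ell$-cohesive group $V$ had average satisfaction below $\ell$, then averaging over the $\ell$ commonly-approved candidates, one of them would provide marginal gain at least $\frac{\ell - \text{avs}_W(V)}{\ell}\cdot\frac{|V|}{n} \ge \frac{1}{\ell}\cdot\frac{\ell}{\alpha k} = \frac{1}{\alpha k}$ (the exact constants need care), contradicting termination. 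The same bound, specialized appropriately, should yield $\alpha$-EJR; I would derive $\alpha$-EJR from $\alpha$-OAS directly since average satisfaction at least $\ell$ forces some voter with at least $\ell$ approved members (this implication is already implicit in how the paper relates the notions). The delicate point here is getting the termination threshold $\frac{1}{\alpha k}$ to line up precisely with the $\frac{\ell}{\alpha}$-large / $\frac{\lambda+1}{\alpha}$-large group sizes in the definitions of $\alpha$-EJR and $\alpha$-OAS.

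For the iteration bound (ii), I would use a potential-function argument with $\pavsc(W)$ as the potential. Each loop iteration performs a swap $W \gets W \cup \{c'\}\setminus\{c\}$ where $c'$ maximizes $\pavd(W,c')$ and $c$ maximizes $\pavd(W,c',x)$ over $x \in W$; since the loop continued, $\pavd(W,c') \ge \frac{1}{\alpha k}$ at the start of that iteration. The main sublemma (this is presumably \Cref{lemPotentialIncrease}, which I may assume) is that the swap increases $\pavsc$ by at least some function of $\pavd(W,c')$ and of the current PAV score — concretely, I expect a bound of the form: the swap gain is at least $\pavd(W,c') - (\text{something like } \frac{\pavsc(W)}{k} \text{ or a harmonic term})$, so that the per-step increase is at least $\frac{1}{\alpha k} - \frac{\text{OPT} - \pavsc(W)}{k}$ or similar, giving geometric-type convergence. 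Summing these increments against the cap $\pavsc(W) \le H_k$ (the maximum possible PAV score, achieved when every voter approves all of $W$), and solving the resulting recurrence, should yield the stated $\frac{\alpha k^2}{(1-\alpha)k+1}H_k$ bound on the number of iterations. Multiplying by the cover size $\lceil (m-k)/(t-k)\rceil$ gives the query complexity; plugging $\alpha=1$ gives $O(mk^2\log k)$ and any fixed $\alpha<1$ makes the denominator $\Theta(k)$, giving $O(mk\log k)$.

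I expect the main obstacle to be the precise form of the single-step potential-increase lemma and matching its constant to the claimed bound: one needs that replacing the weakest committee member $c$ by the strongest outside candidate $c'$ does not lose much relative to simply adding $c'$, and quantifying this loss in terms of $\pavsc(W)/k$ (via convexity/concavity of the harmonic partial sums) is the technical heart. A secondary subtlety is verifying that the query set $\mathcal{Q}$ described in line 5 of the algorithm — containing $W$ in every query and covering all of $C$ — genuinely suffices to compute every $\pavd(W,x)$ and $\pavd(W,c',x)$ exactly; this follows because $\pavsc$ of any set $S$ depends only on the distribution of $|A_i \cap S|$, which is recoverable from any exact query $Q \supseteq S$, and every relevant set here is contained in $W \cup \{x\}$ for some $x$, hence in some $Q_i \in \mathcal{Q}$.
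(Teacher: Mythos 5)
Your overall architecture is the same as the paper's: correctness at termination via a lemma stating that $\pavd^{*}(W)<\frac{1}{\alpha k}$ implies $\alpha$-EJR and $\alpha$-OAS (\Cref{lemLowDeltaImpliesStuff}), an iteration bound via the PAV score as a potential capped at $H_k$ together with a per-swap gain lemma (\Cref{lemPotentialIncrease}), and a factor of $\lceil (m-k)/(t-k)\rceil$ queries per iteration; your closing remark that any query containing $W\cup\{x\}$ determines $\pavd(W,x)$ and $\pavd(W,x,y)$ exactly is also how the paper argues. However, two of the concrete steps you commit to would fail as written. First, the swap-gain quantification: the paper's \Cref{lemPotentialIncrease} gives $\max_{x\in W}\pavd(W,c',x)\ge\frac{(k+1)\pavd(W,c')-1}{k}$, proved by summing the removal losses $\pavd\bigl(W^{+}\setminus\{x\},x\bigr)$ over all $k+1$ members of $W^{+}=W\cup\{c'\}$ and noting this sum is at most $1$; with $\pavd(W,c')\ge\frac{1}{\alpha k}$ this yields a \emph{uniform} per-iteration gain of $\frac{(1-\alpha)k+1}{\alpha k^{2}}$, and the iteration count is simply $H_k$ divided by this constant\,---\,no recurrence or geometric convergence is involved. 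Your guessed forms are too weak: a loss term like $\pavsc(W)/k$ (or even the natural $1/k$) gives zero or negative net gain at $\alpha=1$, and an increment of the shape $\frac{1}{\alpha k}-\frac{\mathrm{OPT}-\pavsc(W)}{k}$ can be negative early in the run, so neither bounds the number of swaps nor produces $\frac{\alpha k^{2}}{(1-\alpha)k+1}H_k$. The point that makes the constant positive is precisely that the loss is bounded by $\frac{1-\pavd(W,c')}{k}$, i.e., the gain term itself is subtracted before averaging over the $k$ removable members.

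Second, in the fairness half, $\alpha$-EJR does not follow from $\alpha$-OAS as a black box: $\alpha$-EJR concerns $\frac{\ell}{\alpha}$-large groups, whereas $\alpha$-OAS with $\lambda=\ell$ only covers $\frac{\ell+1}{\alpha}$-large ones. The paper instead proves the intermediate bound $\text{avs}_{W}(V)\ge\min\bigl\{\bigl|\bigcap_{i\in V}A_{i}\bigr|,\ \frac{1}{n}\cdot\frac{|V|}{\pavd^{*}(W)}-1\bigr\}$ (\Cref{lem:low_delta_is_good}), using the existence of an unelected commonly approved candidate and convexity of $1/x$; for an $\frac{\ell}{\alpha}$-large, $\ell$-cohesive group this gives $\text{avs}_W(V)>\ell-1$, and integrality of individual satisfactions then yields a voter approving at least $\ell$ committee members. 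Your Markov-style averaging inequality does not hold as displayed (it would require $\text{avs}_W(V)\le\ell-1$ rather than $<\ell$), and without the convexity step it does not give the OAS conclusion for real $\lambda$ close to the cohesiveness level; a direct counting argument does suffice for EJR alone, but your plan routes EJR through OAS, which is exactly where the largeness thresholds fail to line up. If you instead take \Cref{lemLowDeltaImpliesStuff} and \Cref{lemPotentialIncrease} as given (as you allow yourself for the latter), your plan collapses to the paper's proof.
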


The proof of \Cref{thmExactQueryUpper} essentially follows from the following two lemmas, the first of which uses the notation
\[
    \pavd^*(W) := \max_{c\in C} \pavd (W, c).
\]

\begin{restatable}{lem}{lemLowDeltaImpliesStuff}\label{lemLowDeltaImpliesStuff}
    If a committee $W$ satisfies $\pavd^{*}(W)<\frac{1}{\alpha k}$, then $W$ satisfies $\alpha$-EJR and $\alpha$-OAS.
\end{restatable}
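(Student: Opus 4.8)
The plan is to work directly from the closed form of the marginal PAV gain. First I would observe that, since $\pavsc(W)=\frac{1}{n}\sum_{i\in N}H_{|A_i\cap W|}$, adding a candidate $c\notin W$ changes voter $i$'s contribution from $H_{|A_i\cap W|}$ to $H_{|A_i\cap W|+1}$ precisely when $c\in A_i$, so
\[
\pavd(W,c)=\frac{1}{n}\sum_{i\in N:\,c\in A_i}\frac{1}{|A_i\cap W|+1}.
\]
The hypothesis $\pavd^{*}(W)<\frac{1}{\alpha k}$ thus says this quantity is strictly below $\frac{1}{\alpha k}$ for every candidate $c$.

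The core of the argument is a single inequality: for any nonempty voter group $V$ and any candidate $c\in\bigl(\bigcap_{i\in V}A_i\bigr)\setminus W$,
\[
\pavd(W,c)\;\ge\;\frac{1}{n}\sum_{i\in V}\frac{1}{|A_i\cap W|+1}\;\ge\;\frac{|V|}{n}\cdot\frac{1}{\text{avs}_W(V)+1},
\]
where the first step drops the summands for voters outside $V$ (valid since every $i\in V$ approves $c$) and the second is Jensen's inequality applied to the convex function $x\mapsto\frac{1}{x+1}$, using $\frac{1}{|V|}\sum_{i\in V}|A_i\cap W|=\text{avs}_W(V)$.

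With this in hand I would prove both properties by contradiction. For $\alpha$-OAS, take $\lambda\in[0,k]$ and a $\frac{\lambda+1}{\alpha}$-large, $\lambda$-cohesive group $V$ with $\text{avs}_W(V)<\lambda$; first note $\bigcap_{i\in V}A_i\not\subseteq W$, since otherwise every $i\in V$ would satisfy $|A_i\cap W|\ge|\bigcap_{j\in V}A_j|\ge\lambda$, contradicting $\text{avs}_W(V)<\lambda$. Hence a candidate $c$ as above exists, and the key inequality together with $\frac{|V|}{n}\ge\frac{\lambda+1}{\alpha k}$ gives $\pavd(W,c)>\frac{\lambda+1}{\alpha k}\cdot\frac{1}{\lambda+1}=\frac{1}{\alpha k}$, contradicting $\pavd^{*}(W)<\frac{1}{\alpha k}$. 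For $\alpha$-EJR, take $\ell\in[k]$ and a $\frac{\ell}{\alpha}$-large, $\ell$-cohesive group $V$ in which no voter approves $\ell$ members of $W$; then $\text{avs}_W(V)\le\ell-1$, and since $|\bigcap_{i\in V}A_i|\ge\ell$ this intersection cannot be contained in $W$, so a suitable $c$ exists and the key inequality yields $\pavd(W,c)\ge\frac{\ell}{\alpha k}\cdot\frac{1}{\ell}=\frac{1}{\alpha k}$, the same contradiction. There is no real obstacle here; the only things to get right are the closed form for $\pavd$, the direction of Jensen's inequality, and the two small "cohesive set is not contained in $W$" observations that guarantee $c$ exists. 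The lone mild subtlety is that $\lambda$ need not be an integer in the OAS definition, but the argument never uses integrality of $\lambda$, only the bounds $|\bigcap_{i\in V}A_i|\ge\lambda$ and $|V|\ge\frac{\lambda+1}{\alpha}\cdot\frac{n}{k}$.
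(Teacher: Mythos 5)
Your proof is correct and follows essentially the same route as the paper: the same closed form for $\pavd(W,c)$, restriction of the sum to the cohesive group $V$, convexity of $x\mapsto 1/(x+1)$ (Jensen), and the observation that the common approval set cannot be contained in $W$. The only cosmetic difference is that the paper packages this computation as an intermediate lemma bounding $\text{avs}_W(V)$ by $\min\{|\bigcap_{i\in V}A_i|,\ \frac{1}{n}\cdot\frac{|V|}{\pavd^*(W)}-1\}$ and then reads off both properties, whereas you inline it into two direct contradiction arguments.
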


\begin{restatable}{lem}{lemPotentialIncrease}\label{lemPotentialIncrease}
    For any committee $W$ and ${c \notin W}$, we have that $\max_{x \in W}\pavd(W,c,x)\ge\frac{(k+1)\pavd(W,c)-1}{k}$.
    In particular, if $\pavd(W,c)\ge\frac{1}{\alpha k}$,
    then $\max_{x \in W}\pavd(W,c,x)\ge\frac{(1-\alpha)k+1}{\alpha k^{2}}$.
\end{restatable}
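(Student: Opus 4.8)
The plan is to prove the stronger \emph{averaging} statement
\[
    \frac{1}{k}\sum_{x \in W}\pavd(W,c,x) \;\ge\; \frac{(k+1)\pavd(W,c)-1}{k},
\]
from which the claimed lower bound on $\max_{x\in W}\pavd(W,c,x)$ follows immediately, since the maximum dominates the average. The ``in particular'' clause is then pure arithmetic: substituting $\pavd(W,c)\ge\frac{1}{\alpha k}$ into $\frac{(k+1)\pavd(W,c)-1}{k}$ and simplifying yields $\frac{(1-\alpha)k+1}{\alpha k^{2}}$.

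To bound the average, write $W' := W\cup\set{c}$, which has $k+1$ elements since $c\notin W$. For each $x\in W$, decompose $\pavsc(W'\setminus\set{x}) = \pavsc(W') - \bigl(\pavsc(W')-\pavsc(W'\setminus\set{x})\bigr)$, i.e.\ subtract the marginal contribution of $x$ within $W'$. Summing over $x\in W$ and using $\pavsc(W')-\pavsc(W)=\pavd(W,c)$ gives
\[
    \frac{1}{k}\sum_{x\in W}\pavd(W,c,x) \;=\; \pavd(W,c) \;-\; \frac{1}{k}\sum_{x\in W}\bigl(\pavsc(W')-\pavsc(W'\setminus\set{x})\bigr).
\]
Hence it suffices to show $\sum_{x\in W}\bigl(\pavsc(W')-\pavsc(W'\setminus\set{x})\bigr)\le 1-\pavd(W,c)$; plugging that in and simplifying produces exactly $\frac{(k+1)\pavd(W,c)-1}{k}$.

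The core of the argument\emdash and the step I expect to be the main obstacle\emdash is this inequality $\sum_{x\in W}\bigl(\pavsc(W')-\pavsc(W'\setminus\set{x})\bigr)\le 1-\pavd(W,c)$, which I would establish by a per-voter computation. Fix a voter $i$ and let $a_i := |A_i\cap W'|$. Removing $x$ from $W'$ changes voter $i$'s term $H_{a_i}$ only when $x\in A_i$, in which case it drops by $H_{a_i}-H_{a_i-1}=1/a_i$; so voter $i$'s total contribution to $\sum_{x\in W}\bigl(\pavsc(W')-\pavsc(W'\setminus\set{x})\bigr)$ is $|A_i\cap W|/a_i$ (read as $0$ when $a_i=0$). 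If $c\in A_i$ this equals $1-1/a_i$; if $c\notin A_i$ it equals $1$ when $A_i\cap W\neq\emptyset$ and $0$ otherwise. Summing over all voters, the $-1/a_i$ terms from voters approving $c$ contribute $-\pavd(W,c)$, using the identity $\pavd(W,c)=\frac{1}{n}\sum_{i:\,c\in A_i}1/a_i$ (adding $c$ raises such a voter's harmonic term from $H_{a_i-1}$ to $H_{a_i}$); the remaining mass is $\frac{1}{n}\bigl(|\set{i:c\in A_i}| + |\set{i:c\notin A_i,\ A_i\cap W\neq\emptyset}|\bigr)$, which is at most $1$ because those two voter sets are disjoint subsets of $N$. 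This yields the desired bound, and the lemma follows. The only points needing care are the boundary case $a_i=0$ (where all relevant contributions vanish and $i$ lies in neither voter set) and the bookkeeping identity for $\pavd(W,c)$.
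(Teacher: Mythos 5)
Your proposal is correct and is essentially the paper's argument in a slightly different guise: the paper sums the marginal contributions $\pavsc(W\cup\{c\})-\pavsc(W\cup\{c\}\setminus\{x\})$ over all $x\in W\cup\{c\}$, observes the per-voter total is at most $1$, and passes to the minimum removal, which is exactly your averaging bound combined with $\max \ge \text{average}$. Your per-voter bookkeeping (isolating the $x=c$ term as $\pavd(W,c)$ and bounding the rest by $1$) matches the paper's computation, so there is no substantive difference.
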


\Cref{lemLowDeltaImpliesStuff} guarantees that when \Cref{alg:a-pav} terminates the desired fairness properties are satisfied. \Cref{lemPotentialIncrease} establishes that the PAV score increases over the algorithm's run. This bounds the number of swaps it performs since $\pavsc(W)$ is at most $H_k$.

\Cref{lemLowDeltaImpliesStuff} is a generalization of the lower bound from Lemma 1 of \citet{Skowron2021-ys}. This generalization is useful because it states that to establish EJR and OAS of any given committee $W$ (no matter how it is derived), it is sufficient to prove that $\pavd^{*}(W)$ is small; hence it can be used as a certificate of satisfaction. In \Cref{sec:pav_proofs}, we show that standard PAV and LS-PAV satisfy $\pavd^{*}(W)<\frac{n}{k}$, which is noteworthy in that it provides a simple proof of the known result that they satisfy EJR and OAS.

We observe that, for exact queries, an $\alpha$-approximation with $\alpha<1$ improves the query complexity by a factor of $k$. In the next section, we will see that such an approximation yields an even larger improvement in query complexity for noisy queries, as it also reduces the accuracy with which we need to estimate $\pavd (W, c', c)$.

\section{Noisy Queries}\label{secNoisyQueries}

We now turn to a query model that includes the noise we abstracted away in \Cref{secExactQueries}. In order to represent voters arriving to the platform one-by-one, we assume that each time the algorithm performs a query $Q \subseteq C$ a voter $i \in N$ is selected independently and uniformly at random%
\footnote{Note that a voter-profile $A_i$ may be queried more than once during the run of the algorithm because we sample \emph{with replacement}. This model simplifies the statistical analysis and has a natural interpretation: Rather than thinking of a finite population of voters, we draw samples from an underlying population distribution where each profile $A_1,..,A_n$ has the same frequency (probability). Furthermore, our model approaches sampling without replacement if the size of the underlying population $n$ is large compared to the number of queried voters, hence both models are qualitatively interchangeable.}%
and then the algorithm observes their votes on the queried candidates $Q \cap A_i$.
We refer to an algorithm that performs queries of size $t$, receives as a response the votes of a single voter, and outputs a committee of size $k$ as a \emph{$(k, t)$-committee selection algorithm with noisy queries}.

To see the connection between this query model and the previous one, note that an algorithm with noisy queries can approximate an exact query $Q$ by estimating the values of $p_S$ by taking the empirical proportion of repeated samples.
By standard sample complexity bounds, using $\Theta\left(\log(2^t/\delta)/\varepsilon^2\right)$ queries, a noisy-query algorithm could guarantee $\pm \varepsilon$ estimates of $p_S$ for all $S\subseteq Q$ with probability $1 - \delta$. Hence if an exact-query algorithm requires no more than $\text{poly}(m)$ queries with additive $\varepsilon$ error, then it can be implemented using a factor of $\Theta(\log m)$ more noisy queries and yield a correct result with probability $1 - \delta$.
What's more, this $\log$ factor is in some cases necessary when moving from the exact-query to the noisy-query setting. In \Cref{appNoisyVsExact}, we demonstrate instances for which a non-adaptive exact-query algorithm needs only $\Theta(m)$ queries, while in order to be correct with any fixed probability $\delta$, a non-adaptive noisy-query algorithm requires $\Omega(m \log m)$ queries.

Conversely, notice that one can use exact queries to simulate noisy queries. Indeed, $p_S$ is exactly the probability that an incoming voter will vote yes on candidates $S$ and no on $Q\setminus S$ in response to a query $Q$. An algorithm with access to exact query values can simply sample a voter response and feed it to a noisy-query algorithm. Therefore, the lower bounds on the query complexity of exact-query, non-adaptive algorithms, in particular \Cref{thmExactLowerBoundN11}, apply to noisy-query, non-adaptive algorithms as well.
As the number of candidates becomes large, adaptivity is therefore necessary to attain theoretical guarantees\emdash mirroring the approach of online platforms in practice.

A natural starting point is the exact-query adaptive algorithm, namely \Cref{alg:a-pav}. Indeed, it can be adapted to the noisy setting by replacing exact queries with a sufficient number of noisy queries, $\ell$, to obtain high-probability bounds on $\pavd$, yielding \Cref{alg:noisy-a-pav}.

\begin{algorithm}[tb]
    \caption{$(k,t)$-noisy-$\alpha$-PAV}
    \label{alg:noisy-a-pav}
    \begin{algorithmic}[1]
        \State $\ell \gets \ceil*{288  \left( \frac{\alpha k ^2}{(1 - \alpha)k + 1} \right)^2 \log \left(\frac{8mk^4}{\delta} \right)}$
        \State Choose $W \in \binom{C}{k}$, $c \in W$, and $c' \notin W$ arbitrarily
        \State $\gamma \gets \infty$
        \While{$\gamma \geq 1/(\alpha k) - ((1 - \alpha)k + 1)/(12\alpha k^2)$}
        \State $W \gets W\cup \set{c'} \setminus \set{c}$
        \State Choose $\mathcal{Q} = \{Q_i\}_i$, with $|Q_i|=t$, such that \par
        \par\hspace{-0.1in} ${W\subseteq\bigcap\mathcal{Q}}$ and $C \subseteq \bigcup\mathcal{Q}$
        \State Ask each query $Q \in \mathcal{Q}$ to $\ell$ new voters
        \State $\hat{\Delta}(W, x) \gets$ estimate of $\Delta(W, x)$ using $\ell$ voters
        \par\hspace{-0.1in}  from query $Q$ containing $W \cup \set{x}$ \Comment{$\forall x \notin W$}
        \State $\hat{\Delta}(W, x, y) \gets$ estimate of $\Delta(W, x, y)$ using $\ell$ voters
        \par \hspace{-.1in} from $Q$ containing $W \cup \set{x}$ \Comment{$\forall x \notin W, \forall y \in W$}
        \State $c'\gets \argmax_{x\notin W}\hat{\pavd}(W, x)$
        \State $c \gets \argmax_{x\in W}\hat{\pavd}(W, c',x)$
        \State $\gamma \gets \hat{\pavd}(W, c')$
        \EndWhile
        \State \Return $W$
    \end{algorithmic}
\end{algorithm}
The key is to choose $\ell$ large enough that if the termination condition is not met, i.e., we have $\hat{\pavd}(W,c')<\frac{1}{\alpha k} - \frac{(1 - \alpha)k + 1}{12\alpha k^2}$, the resulting swap is guaranteed to yield a positive improvement in the PAV-score, such that the number of steps of the algorithm is bounded. With the choice of $\ell$ in \Cref{alg:noisy-a-pav}, we obtain the following theorem, whose proof can be found in \Cref{sec:noisy-a-pav-bound}.
\begin{restatable}{thm}{noisyAPAVBound} \label{thm:noisy-a-pav-bound}
    For any $m\ge t > k$, with probability at least $1-\delta$, \Cref{alg:noisy-a-pav}
    returns a committee that satisfies $\alpha$-EJR and $\alpha$-OAS
    after querying no more than
    \begin{align*} 578H_k\left\lceil \frac{m-k}{t-k}\right\rceil \left(\frac{\alpha k^{2}}{(1-\alpha)k+1}\right)^{3}\log\left(\frac{4mk^{4}}{\delta}\right)
    \end{align*}
    voters. For any fixed $\delta > 0$, if $\alpha=1$, this leads to a query complexity of $\mathcal{O}\left(mk^6\log k\log m\right)$,
    and if $\alpha<1$, this leads to a query complextiy of $\mathcal{O}\left(mk^3\log k\log m\right)$.
\end{restatable}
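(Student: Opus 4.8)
The plan is to condition on a high-probability ``good event'' $E$ on which every estimate $\hat\pavd$ computed during the run is within $\varepsilon_0 := \frac{(1-\alpha)k+1}{12\alpha k^2}$ of the true quantity it estimates, show that on $E$ the algorithm both returns a correct committee and terminates within $O(H_k\rho)$ iterations (where $\rho := \frac{\alpha k^2}{(1-\alpha)k+1} = \frac{1}{12\varepsilon_0}$ is the factor in the statement), and finally bound $\Pr[\overline E]$ via Hoeffding's inequality and a union bound. For the concentration step, fix an iteration and condition on the committee $W$ at that point (determined by previously-seen votes). For $x\notin W$ and a query $Q \supseteq W\cup\set{x}$, each of the $\ell$ fresh voters answering $Q$ is independent and uniform, and that voter's contribution to $\pavd(W,x)$ — namely $\mathbb{I}[x\in A_i]/(|A_i\cap W|+1)$ — depends only on $A_i\cap Q$ and lies in $[0,1]$ since $|A_i\cap W|\le k$; likewise the per-voter contribution to $\pavd(W,x,y)$ for $y\in W$ is a difference of harmonic numbers lying in $[-1,1]$. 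Hence $\hat\pavd(W,x)$ and $\hat\pavd(W,x,y)$ are unbiased averages of bounded i.i.d.\ variables, so Hoeffding gives $\Pr\!\big[\,|\hat\pavd - \pavd| \ge \varepsilon_0\,\big] \le 2e^{-\ell\varepsilon_0^2/2}$.

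Next, I argue that $E$ implies both conclusions. \emph{Correctness:} when the loop exits, $\gamma = \hat\pavd(W,c') < \tfrac{1}{\alpha k} - \varepsilon_0$, and since $c' = \argmax_{x\notin W}\hat\pavd(W,x)$, every $x\notin W$ satisfies $\hat\pavd(W,x) < \tfrac{1}{\alpha k}-\varepsilon_0$; on $E$ this gives $\pavd^*(W) < \tfrac{1}{\alpha k}$, so \Cref{lemLowDeltaImpliesStuff} yields $\alpha$-EJR and $\alpha$-OAS. \emph{Iteration bound:} in any iteration past the first, the swap $W\mapsto W\cup\set{c'}\setminus\set{c}$ occurs only because the preceding iteration found $\hat\pavd(W,c')\ge \tfrac{1}{\alpha k}-\varepsilon_0$, hence on $E$, $\pavd(W,c')\ge \tfrac{1}{\alpha k}-2\varepsilon_0$. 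Plugging this into the first inequality of \Cref{lemPotentialIncrease} (with $c'$ in the role of $c$) and simplifying with $\varepsilon_0 = \tfrac{1}{12\rho}$ shows $\max_{x\in W}\pavd(W,c',x) \ge \tfrac{2}{3\rho}$; since $c$ maximizes $\hat\pavd(W,c',\cdot)$, two further uses of accuracy give $\pavd(W,c',c) \ge \tfrac{2}{3\rho} - 2\varepsilon_0 = \tfrac{1}{2\rho} > 0$. So each such swap raises $\pavsc$ by at least $\tfrac{1}{2\rho}$; since $\pavsc\in[0,H_k]$ throughout and the arbitrary first swap can only lower it, the loop runs for at most $T_{\max} := 2H_k\rho + 1$ iterations, each issuing $\lceil\tfrac{m-k}{t-k}\rceil$ queries to $\ell$ voters.

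To bound $\Pr[\overline E]$, note that on $E$ the run computes at most $T_{\max}(m-k)(k+1)$ estimates in total; defining $E$ over this fixed-length prefix of the computation sidesteps the apparent circularity between $T_{\max}$ and the union bound. By the concentration bound and a union bound, $\Pr[\overline E] \le 2T_{\max}(m-k)(k+1)\,e^{-\ell\varepsilon_0^2/2}$, and the choice $\ell = \lceil 288\rho^2\log(8mk^4/\delta)\rceil$ — which satisfies $\ell\varepsilon_0^2/2 \ge \log(8mk^4/\delta)$ since $288\rho^2 = 2/\varepsilon_0^2$ — makes $e^{-\ell\varepsilon_0^2/2}\le\delta/(8mk^4)$, and one checks $2T_{\max}(m-k)(k+1)\le 8mk^4$ (using $\rho H_k = O(k^3)$ when $\alpha=1$, and less otherwise), so $\Pr[\overline E]\le\delta$. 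On $E$, the total number of voters queried is at most $T_{\max}\lceil\tfrac{m-k}{t-k}\rceil\,\ell$; substituting $T_{\max} = 2H_k\rho+1$ and the value of $\ell$ and folding the ceilings into the constant gives the stated bound $578\,H_k\lceil\tfrac{m-k}{t-k}\rceil\rho^3\log(4mk^4/\delta)$. The asymptotic forms follow since $\rho = \Theta(k^2)$ for $\alpha=1$ and $\rho = \Theta(k)$ for fixed $\alpha<1$, while $H_k = \Theta(\log k)$.

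I expect the main obstacle to be the bookkeeping of error propagation rather than any single hard idea: the termination threshold in \Cref{alg:noisy-a-pav} is weakened from the $\tfrac{1}{\alpha k}$ of \Cref{alg:a-pav} by exactly $\varepsilon_0$, and the factors $12$ and $288$ are calibrated so that (a) a non-terminating iteration still yields a strictly positive and quantitatively controlled $\pavsc$-increase through \Cref{lemPotentialIncrease}, while (b) the union bound over all of the \emph{adaptively chosen} estimates still closes at level $\delta$. Making (a) and (b) hold simultaneously with these constants, and phrasing the concentration argument carefully enough that conditioning on the (random) committee at each step is legitimate, is where the real work lies.
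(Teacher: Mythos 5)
Your proposal is correct and follows essentially the same route as the paper's proof: the same good event (all $\hat{\pavd}$ estimates within $\varepsilon = \frac{(1-\alpha)k+1}{12\alpha k^2}$ over a fixed prefix of iterations, via Hoeffding plus a union bound over the at most $(m-k)(k+1)$ estimates per iteration), the same use of \Cref{lemLowDeltaImpliesStuff} at termination, and the same potential argument via \Cref{lemPotentialIncrease} showing each non-terminal swap gains at least $\frac{(1-\alpha)k+1}{2\alpha k^2}$ in PAV score. Your accounting of the arbitrary first swap (the ``$+1$'' iteration) is slightly more careful than the paper's and only perturbs constants, which the paper itself treats loosely.
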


While \Cref{alg:noisy-a-pav} achieves good worst-case query complexity, it may be suboptimal on certain instances because of two reasons: (i) after each swap, \Cref{alg:noisy-a-pav} discards all previous information so each candidate is reassessed from scratch, and
(ii) it presents each candidate $c\notin W$ to the same number of voters, even though it may quickly become apparent that some candidates are more promising than others.

To address issue (i), we can use all past votes to compute bounds on $\pavd$. A difficulty with this approach is that past voters may not have voted on all candidates in $W$ (which is necessary to directly estimate $\pavd(W,c)$), since they may have been queried on a different committee $W'$.
But we can nonetheless use these past votes to obtain upper and lower bounds on estimated values.
To address issue (ii), we can present promising candidates
to voters more often. Further, it is possible to perform swaps as soon as we are confident they yield an increase of the PAV-score of at least some value $\varepsilon$, rather than first querying a predetermined number of voters as in \Cref{alg:noisy-a-pav}.

These ideas are incorporated into \Cref{alg:ucb-a-pav}, called ucb-$\alpha$-PAV; see Appendix~\ref{sec:ucb-a-pav-bound} for a formal description of the algorithm and an analysis of its query complexity.

\section{Experiments}\label{secEmpirical}

Since the analysis in the theoretical sections considers worst-case approval profiles, it is possible that, in practice, we may be able to find good committees with fewer queries than required by \Cref{thm:noisy-a-pav-bound}.
We investigate this question empirically on real data from online discussions with only a few hundred voters, each voting on only a fraction of all comments.

\paragraph{Datasets.}
Polis provides open-use data from real deliberations hosted
on their platform.\footnote{\texttt{https://github.com/compdemocracy/openData}}
These include, for instance, a discussion organized by the government
of Taiwan, which led to the successful regulation of Uber.
Since participants
typically only vote on a fraction of comments, most votes are missing.
To be able to simulate the proposed adaptive algorithms, we first infer
these missing votes using a matrix factorization library, LensKit.\footnote{\texttt{https://lenskit.org}}
Importantly, we infer votes only for the purpose of the experiments; if our algorithms were executed during the discussion, they would adaptively
query users about the relevant comments without relying on any inference
method.

In most datasets, we observe several comments that are nearly universally approved.
Since these comments make achieving EJR and
OAS trivial, we remove comments approved by more than $60\%$ of participants. This step may also be appropriate in practice to gain insights into participants' opinions beyond uncontroversial issues.

The number of queried voters $L$ ranges from $87$ to $1000$ across the $13$ datasets (see \Cref{app:experiments} for details). For all datasets, we assume that each voter votes on $t=20$ comments. Since the total number of comments $m$ ranges from $31$ to $1719$ across datasets, the percentage of comments each voter votes on, $t/m$, ranges from $1\%$ to $65\%$.
For
each dataset, we run the algorithms with target committee sizes $k=5, 7, 10$. Hence, there are a total of $13\cdot 3=39$ experiments (times $10$ random seeds).

The second dataset we consider consists of Reddit discussions.\footnote{\texttt{https://www.kaggle.com/datasets/}\\\texttt{josephleake/huge-collection-of-reddit-votes}}
To obtain an interesting dataset, we combined voting data from two subreddits, r/politics and r/Conservative, which are arguably situated at opposite ends of the American political spectrum.
More details about this dataset can also be found in \Cref{app:experiments}.

\paragraph{Algorithms.}
We evaluate noisy-$\alpha$-PAV (\Cref{alg:noisy-a-pav}) and ucb-$\alpha$-PAV (\Cref{alg:ucb-a-pav}). Both query $L$ voters in random order, each of whom votes on $t=20$ comments.
To enable these algorithms to swap candidates after querying only a small number of voters, we make the following modifications:
For both \Cref{alg:noisy-a-pav} and \Cref{alg:ucb-a-pav} we treat $\ell$, the number of times we ask voters about each candidate, as a parameter. In addition, for \Cref{alg:ucb-a-pav},
we replace the numerator in the confidence intervals
$err_s$ with a parameter $\theta$.
Both $\ell$ and $\theta$ were chosen based on validation on a separate dataset, see \Cref{app:experiments} for details. We run both algorithms on all the $L$ voters, rather than terminating as soon as we can guarantee $\pavd{}^*(W)<\frac{1}{\alpha k}$ (and hence EJR and OAS).

To obtain an upper bound on the attainable performance, we execute $\alpha$-PAV (\Cref{alg:a-pav}) with access to exact queries. To obtain the best possible $\alpha$, we let \Cref{alg:a-pav} run as long as the swap increases the PAV score, i.e., $\Delta(W,c',c)>0$, instead of terminating as soon as $\Delta(W,c') < 1 / k$ (which would be sufficient to guarantee EJR and OAS).

To verify that the proposed algorithms do indeed take the complementarity of different candidates into account, we also compare against standard approval voting (AV) with access to all votes, which simply selects the $k$ candidates with the most approval votes.

\paragraph{Performance Metric.}
As a performance metric,
we use $\hat{\alpha} := \frac{1}{k\Delta^{*}(W)}$, where $W$ is the committee selected by the respective algorithm. According to \Cref{lemLowDeltaImpliesStuff}, $\alpha > \hat{\alpha}$, so this implies $\hat{\alpha}$-EJR and $\hat{\alpha}$-OAS. As discussed in \Cref{secModel}, $\alpha = 1$ is the best that can be guaranteed across all possible approval profiles.
Note that $\alpha$ may be larger than $\hat{\alpha}$, hence obtaining $\hat{\alpha}=1$ is a sufficient, but not a necessary condition for OAS and EJR. Nevertheless, we will use $\hat{\alpha}$ as a metric for two reasons: first, verifying whether $\alpha\geq 1$ (i.e., whether a committee satisfies EJR and OAS) is computationally hard \citep{ABCE+17}, which makes it impractical for evaluation; and the stronger condition $\hat{\alpha}\geq 1$ provides the additional benefit that EJR and OAS can easily be verified through \Cref{lemLowDeltaImpliesStuff}. Second, one could argue that $\hat{\alpha}$ is a meaningful quantity in its own right since it (or rather its inverse $1/\hat{{\alpha}}$) measures how much voter satisfaction could be improved by adding another candidate (giving lower weight to voters who already have many approved candidates).

\paragraph{Polis Results.}
In \Cref{fig:boxplot}, we show the $\hat{\alpha}$ achieved on all the Polis datasets for each of the four algorithms.
\begin{figure}[t]
    \centering
    \includegraphics[width=\linewidth]{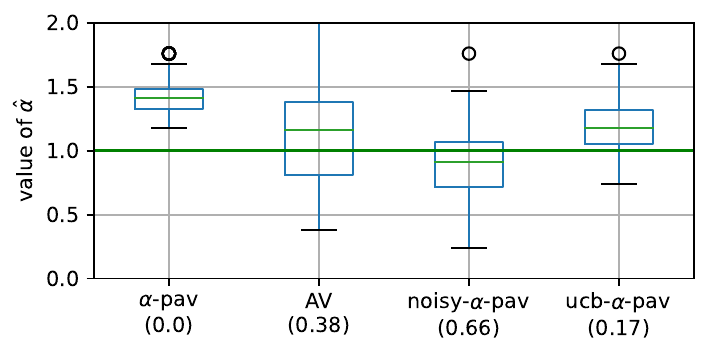}
    \caption{Boxplots where datapoints
        correspond to the $39$ Polis problems ($\times 10$ random seeds). The top / bottom whiskers indicate the maximal / minimal points (except outliers, which are marked by circles), the line in the middle is the median, and the bottom and top of the boxes are the 1st and 3rd quartiles, respectively. The numbers in parenthesis are the fractions of problems where the respective algorithm yields a $\hat{\alpha}\le 1$.
    } \label{fig:boxplot}
\end{figure}
Recall that higher $\hat{\alpha}$ is better and that $\hat{\alpha} \ge 1$ implies OAS and EJR.
As expected, $\alpha$-PAV performs best since it has access to exact queries. Note that it often achieves an $\alpha$ substantially larger than $1$, which means that the corresponding instance allows for better representation than can be guaranteed in the worst case.
AV performs surprisingly well in most experiments, but
in $38\%$ of the cases, it yields $\hat{\alpha}$ smaller than $1$ (and sometimes much smaller). We conclude that for some datasets, it is important to take the complementarity of candidates into account rather than selecting them individually.
The challenge for the proposed algorithms is to do so while being sample-efficient. We see that noisy-$\alpha$-PAV often fails to achieve an $\hat{\alpha}\ge 1$. We know from \Cref{thm:noisy-a-pav-bound} that given enough queries, noisy-$\alpha$-PAV achieves $\hat{\alpha}\ge 1$, so this failure is due to the low number of queried voters. By contrast, ucb-$\alpha$-PAV
yields $\hat{\alpha}\ge 1$ in $83\%$ of the cases, and $\hat{\alpha}\ge 0.75$ in all cases, which indicates that the proposed extensions (i.e., querying promising candidates more often, swapping as soon as possible, and reusing voters) indeed lead to more efficient use of data.

\paragraph{Reddit Results.}
To illustrate why approval voting can perform poorly despite having
access to the full votes, we execute the algorithms on the Reddit dataset described above.
\begin{figure}
    \centering
    \includegraphics[width=\linewidth]{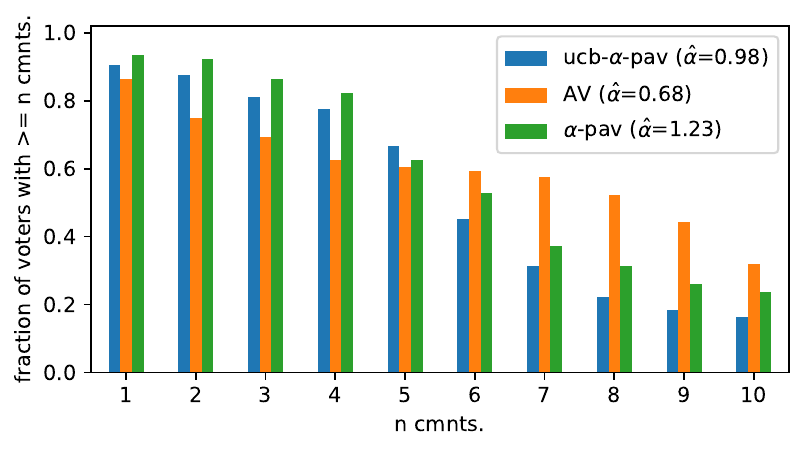}
    \caption{Results on Reddit dataset (with $L=608, m=2135, k=10$): the fraction of voters ($y$-axis) that approve of at least $1,2,..,10$ candidates ($x$-axis) among the selected committee of size $k=10$.}
    \label{fig:reddit}
\end{figure}
In this experiment, AV achieves only $\hat{\alpha}=0.68$. To understand why this happens, we show in \Cref{fig:reddit} the fraction of voters who have at least $1,\ldots, 10$ approved comments in the committee.
We see that AV yields a committee where a high fraction of voters approve many candidates, e.g., about $60\%$ of voters approve $7$ or more candidates, whereas for
$\alpha$-PAV, this is the case for only about $40\%$. This comes at the cost of a high fraction of voters who are poorly represented by AV, e.g., about $25\%$ of voters get at most one approved candidate, whereas for $\alpha$-PAV, this percentage is less than $10\%$. This is to be expected as approval voting does
not take the complementarity of candidates into account and can therefore
lead to less equitable results.
Finally, we observe that ucb-$\alpha$-PAV achieves an $\hat{\alpha}$ close to $1$, and its approval fractions look similar to $\alpha$-PAV, i.e., more equitable than AV. It is interesting that ucb-$\alpha$-PAV performs well on this example, since it only has access to $t=20$ votes for each of the $L=608$ queried candidates, while it has to select from a large number of comments, $m=2135$.

\section{Discussion}\label{secConclusion}

This work bridges the gap between online civic-participation systems, such as Polis, and committee-election methods by enabling them to handle incomplete votes. To deploy the proposed algorithms on such platforms, two practical issues must be considered.

First, our adaptive approach requires control over what the Polis creators call \emph{comment routing}~\cite{SBES+21}: the algorithm that decides which comments are shown to which participants. If on a given platform a comment-routing algorithm is already in place, shared control is possible: each algorithm could determine part of the slate of comments shown to a participant, or the participants themselves can be divided between the algorithms.

Second, in our analysis, we assumed that all comments have been submitted\emdash or all candidates are known\emdash at the time we run our algorithms.
Nevertheless, our algorithms can be extended straightforwardly to a growing set of comments, but we would inevitably lose the representation guarantees for comments that were submitted late if not enough participants could vote on them. In practice, this could be resolved by setting a comment submission deadline, which has been done previously by Polis.

An alternative to our approach would be to complete partial approval votes using collaborative filtering~\cite{RV97}. The completed approval votes can then be aggregated through any approval-based committee election rule, such as PAV. The disadvantage of this approach is that it is unlikely to lead to worst-case guarantees of the type we establish in this paper.

Finally, we emphasize that our approach may be applicable to social media more generally. For instance, as mentioned in \Cref{secEmpirical}, Reddit users also approve or disapprove comments through upvotes and downvotes. However, Reddit uses these inputs to produce a \emph{ranking} of the comments, in contrast to our goal of selecting a subset. There is work on obtaining justified-representation-type guarantees for rankings~\cite{SLBP+17}, which could possibly be extended to the setting of incomplete votes using the techniques developed in this paper. More broadly, this article provides insights into how to fairly represent opinions of groups given incomplete information, which may be relevant for the design of more constructive online ecosystems.

\bibliography{abb,references,ultimate}

\clearpage

\appendix

\if\onlyBody0
    \onecolumn
\begin{center}
    {\LARGE\bf Appendix}
\end{center}
\bigskip

\section{Proof of Theorem~\ref{thmNonadaptiveAlgorithmKEqualsTPlust2}}\label{appT2K3JR}

\NonadaptiveJR*

\begin{proof}
    Consider Algorithm~\ref{algNonadaptiveGreedy}, described below.

    \begin{algorithm}[H]
        \caption{$(k, t)$-non-adaptive (for $t \geq \frac{2}{3}k$)}
        \label{algNonadaptiveGreedy}
        \begin{algorithmic}[1]
            \State Query every set of candidates of size $t$\;
            \For{$i \gets 1, 2, \dots, t$}
            \State
            $c_i \gets$ approval winner among voters not approving $\seq{c}{i - 1}$\;
            \EndFor
            \For{$i \gets t + 1, \ldots, \lfloor \frac{3}{2} t \rfloor$}
            \State
            $c_{i} \gets$ arbitrary default candidate\;
            \For{$c \in C$}
            \State
            $A \gets$ set of voters approving of $c$ but not any of $\{\seq{c}{t - 1}\}$\;
            \State
            $B \gets$ set of voters approving of $c$ but not any of $\{\seq{c}{i-1}\} \setminus \{c_{\lfloor t/2 \rfloor}, \ldots, c_{t-1}\}$\;
            \If{$\abs{A} \geq \frac{n}{k}$ and $\abs{B} \geq \frac{n}{k}$}
            \State
            $c_{i} \gets c$\;
            \EndIf
            \EndFor
            \EndFor
            \State \Return $\{\seq{c}{k}\}$\;
        \end{algorithmic}
    \end{algorithm}

    Provided that $t \geq \frac{2}{3} k$, it is straightforward to verify that the \textbf{if} condition can be checked using only information about sets of voters of size $t$.
    Thus, Algorithm~\ref{algNonadaptiveGreedy} is indeed a non-adaptive $(k, t)$-committee selection algorithm with exact queries.

    For each $i \in \{1, 2, \dots, \lfloor \frac{3}{2} t \rfloor\}$, we say that a voter is \emph{satisfied on round $i$} if it approves of $c_i$, but none of the previously selected candidates $\seq{c}{i - 1}$, and we say that a voter is \emph{satisfied by round $i$} if it was satisfied on some round $j \leq i$. We prove that the final committee satisfies JR by counting the fraction of voters that are satisfied on each round. Indeed, JR is equivalent to the property that there is no 1-cohesive $\frac{1}{k}$-fraction of voters that is left unsatisfied by the $k\tth$ round.

    The case where $k \leq t$ is easy: on each of the first $k$ rounds, either we satisfy a $\frac{1}{k}$ fraction of voters, or there is no 1-cohesive set of $\frac{n}{k}$ unsatisfied voters. Thus, by round $k$, either all voters are satisfied, or the remaining set of unsatisfied voters has no 1-cohesive set of size $\frac{n}{k}$.

    Now suppose that $k > t$. For each $i$, let $x_i$ denote the fraction of voters that are satisfied on round $i$.
    Note that the sequence of $x_i$ are weakly decreasing for $i \leq t$.
    Again, if any $x_i < \frac1k$, it means that there is no 1-cohesive $\frac1k$-fraction of unsatisfied voters after round $i$, so JR is already satisfied. So assume each $x_i \geq \frac1k$ for all $i \leq t$.
    Further, if on any round $i > t$ we fail to find a candidate $c$ making the \textbf{if} condition true, we claim that JR is already satisfied. For if JR were not satisfied, then there would be some candidate $c$ approved by a $\frac1k$-fraction of voters $S$ who approve of no previous candidates. Clearly, we would then have $S \subseteq A$ and $S \subseteq B$, so $A$ and $B$ both contain at least $\frac1k$ fractions of voters.

    Thus, we may assume that, for each $i>t$, candidate $c_i$ satisfies the \textbf{if} statement on round $i$. Consider an arbitrary round $i$. Let $A$ and $B$ denote the respective values of the variables on the iteration of the inner loop where $c_i$ was set to its ultimate value. Observe that the candidates enumerated in the definitions of $A$ and $B$ cover all previously selected candidates. This means that $A \cap B$ is precisely the set of voters approving $c_{i}$ and not any of the previous candidates; in other words, $A \cap B$ is the set of voters satisfied on round $i$. On the other hand, since candidates $\seq{c}{\lfloor t/2 \rfloor}$ are enumerated in the definitions of both sets, it follows that $A \cup B$ is a set of voters approving $c_{i}$ but not any of $\seq{c}{\lfloor t/2 \rfloor}$.
    This means that $A \cup B$ can contain at most an $x_{\lfloor t/2 \rfloor}$ fraction of voters, for otherwise candidate $c_i$ should have been selected earlier, on round $\lfloor t/2 \rfloor$.
    Thus, we may lower bound the fraction of voters satisfied on round $i$ as
    $$\frac{1}{n}\left(\abs{A \cap B}\right) = \frac{1}{n}\left(\abs{A} + \abs{B} - \abs{A \cup B}\right) \geq \frac{1}{n}\left(\frac{n}{k} + \frac{n}{k} - nx_{\lfloor t/2 \rfloor}\right) = \frac{2}{k} - x_{\lfloor t/2 \rfloor}.$$

    Summing over each of the first $k$ rounds, the number of satisfied voters is
    \begin{align*}
        \sum_{i = 1}^{k} (\text{\# satisfied voters on round $i$}) & \geq \sum_{i = 1}^t x_i + \sum_{i = t+1}^{k} \left(\frac{2}{k} - x_{\lfloor t/2 \rfloor}\right)                  \\
                                                                   & = \sum_{i = 1}^{t} x_i - \sum_{i=t+1}^k x_{\lfloor t/2 \rfloor} + (k-t) \frac{2}{k}                              \\
                                                                   & = \sum_{i = 1}^{k - t} \left( x_i - x_{\lfloor t/2 \rfloor}\right) + \sum_{i = k-t+1}^k x_i + (k-t) \frac{2}{k}.
        \intertext{Since the $x_i$ are decreasing and $k-t \leq \lfloor t/2 \rfloor$ when $t \geq 2k/3$, this first term is nonnegative, and $x_i \geq 1/k$ for $i \leq t$ in this case. Therefore we have}
        \sum_{i = 1}^{k} (\text{\# satisfied voters on round $i$}) & \geq \sum_{i = k-t+1}^k x_i + (k-t) \frac{2}{k}                                                                  \\
                                                                   & \geq (t-(k-t)) \frac{1}{k} + (k-t) \frac{2}{k}                                                                   \\
                                                                   & = \frac{2(k - t) + (2t - k)}{k}                                                                                  \\
                                                                   & = 1.
    \end{align*}
    Since all voters are satisfied by round $k$, the final committee satisfies JR.
\end{proof}

\section{Proofs of lower bounds for non-adaptive algorithms with exact queries}\label{appComputationalSearch}

In this section, we prove Theorems~\ref{thmExactLowerBoundN2} and~\ref{thmExactLowerBoundN11}. Both theorems can be derived as applications of the following lemma, which formalizes the properties we require of the adversarial instances shown in Figure~\ref{figMergedCounterexamples}. We only use statement \ref{itmGeneralLPLargeTConclusion} in this paper, but we include statement \ref{itmGeneralLPSmallTConclusion} as well because we believe it may be of independent interest.

\begin{lem}\label{lemLowerBounds}
    Suppose that, for some integers $0 \leq h \leq k_0 \leq \ell$, there exists a probability distribution $\{x_S\}_{S \subseteq [\ell]}$ over subsets of $[\ell]$ such that:
    \begin{enumerate}[label={(\arabic*)}]
        \item\label{itmGeneralLPMarginalConstraint} For any sets $T_1, T_2 \subseteq [\ell]$ such that $\abs{T_1} = \abs{T_2} \leq h$,
              $$\sum_{S \supseteq T_1} x_S = \sum_{S \supseteq T_2} x_S.$$
        \item\label{itmGeneralLPUnhappyConstraint} For some $s^* \in [\ell]$, $x_{\{s^*\}} \geq \frac{1}{k_0}$.
    \end{enumerate}
    Then there exist exact query adversaries for which:
    \begin{enumerate}[label={(\roman*)}]
        \item\label{itmGeneralLPSmallTConclusion} For any $t \leq h$, any non-adaptive $(k, t)$-committee selection algorithm satisfies JR with probability at most $\left(\frac{k_0}{\ell}\right)^{\lfloor k / k_0 \rfloor}$.
        \item\label{itmGeneralLPLargeTConclusion} For any $t > h$, for any $\delta > 0$, any non-adaptive $(k, t)$-committee selection algorithm that makes fewer than $\Omega(m^{h + 1})$ queries satisfies JR with probability at most $\left(\frac{k_0}{\ell}\right)^{\lfloor k / k_0 \rfloor} + \delta$.
    \end{enumerate}
\end{lem}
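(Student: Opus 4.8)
The plan is to construct a single family of randomized adversarial instances and analyze it separately in the two regimes. Set $B := \lfloor k/k_0\rfloor$ and $r := k - Bk_0 \in \{0,\dots,k_0-1\}$. Assuming $m$ is large enough, the adversary draws a uniformly random injective placement into $[m]$ of $B$ pairwise-disjoint ``blocks'' $\mathcal{B}_1,\dots,\mathcal{B}_B$ of size $\ell$, each with a uniformly random internal bijection $\sigma_j:[\ell]\to\mathcal{B}_j$, together with $r$ ``dummy'' candidates $d_1,\dots,d_r$; every remaining candidate is approved by nobody. It then fixes a suitable number $n$ of voters, partitions them into $k$ equal groups of size $n/k$, assigns $k_0$ of these groups to each block $\mathcal{B}_j$ — letting those $k_0 n/k$ voters approve $\sigma_j(S)$ with frequency $x_S$ — and assigns one group to each dummy $d_i$, whose voters approve only $d_i$. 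Writing $s_j^* := \sigma_j(s^*)$, condition \ref{itmGeneralLPUnhappyConstraint} makes at least $x_{\{s^*\}}\cdot\frac{k_0 n}{k}\ge \frac nk$ block-$j$ voters approve exactly $\{s_j^*\}$, so every JR committee $W$ must contain $s_j^*$ for all $j$ and $d_i$ for all $i$.

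The heart of the argument is that in the ``low-information regime'' — where no query ever contains more than $h$ candidates of a single block (automatic when $t\le h$) — the entire transcript of responses is a deterministic function of the placement alone, independent of the bijections $\sigma_j$. Indeed, for a query $Q$ the value $p_S$ vanishes unless $S\subseteq Q\cap\mathcal{B}_j$ for some $j$ (or $S\in\{\emptyset,\{d_i\}\}$), in which case $p_S=\tfrac{k_0}{k}\Pr_{T\sim x}\!\big[T\cap\sigma_j^{-1}(Q\cap\mathcal{B}_j)=\sigma_j^{-1}(S)\big]$; by inclusion–exclusion this is a fixed linear combination of up-set masses $\sum_{T\supseteq T'}x_{T'}$ over $T'\subseteq\sigma_j^{-1}(Q\cap\mathcal{B}_j)$, so by condition \ref{itmGeneralLPMarginalConstraint} it depends only on $|S\cap\mathcal{B}_j|$ and $|Q\cap\mathcal{B}_j|$ (both $\le h$), not on which candidates of $\mathcal{B}_j$ were queried. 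Hence, conditioned on the placement, each $s_j^*$ is uniform on $\mathcal{B}_j$, jointly independent across $j$, and independent of the algorithm's output $W$ (a function of the transcript and private randomness). Therefore
\begin{align*}
\Pr[\mathrm{JR}] &\le \Pr\big[(\forall i:\,d_i\in W)\wedge(\forall j:\,s_j^*\in W)\big]\\
&= \ee_W\!\Big[\mathbb{I}(\forall i:\,d_i\in W)\prod_{j=1}^{B}\tfrac{|W\cap\mathcal{B}_j|}{\ell}\Big].
\end{align*}
When all $r$ dummies lie in $W$, only $k-r=k_0 B$ slots remain, so $\sum_j|W\cap\mathcal{B}_j|\le k_0 B$ and AM–GM gives $\prod_j|W\cap\mathcal{B}_j|\le k_0^{B}$; thus the bracketed quantity is at most $(k_0/\ell)^{B}$ pointwise, which proves \ref{itmGeneralLPSmallTConclusion}.

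For \ref{itmGeneralLPLargeTConclusion}, with $t>h$, say a block is \emph{cracked} by the fixed non-adaptive query set if some query contains at least $h+1$ of its candidates. On the complement of ``some block is cracked'' — a placement-measurable event, so it does not disturb the conditional uniformity and independence above — the previous paragraph applies verbatim, yielding $\Pr[\mathrm{JR}\wedge\text{no crack}]\le(k_0/\ell)^{B}$. It remains to bound $\Pr[\text{some crack}]$: a fixed size-$t$ query cracks a fixed block with probability $O(m^{-(h+1)})$, because for each of the $\binom{t}{h+1}=O(1)$ choices of an $(h+1)$-subset of the query, the chance that this subset lands inside a uniformly random $\ell$-subset of $[m]$ is $\frac{\ell(\ell-1)\cdots(\ell-h)}{m(m-1)\cdots(m-h)}=O(m^{-(h+1)})$; a union bound over the $B$ blocks and the $N$ queries gives $\Pr[\text{some crack}]=O(Nm^{-(h+1)})$, which is at most $\delta$ once $N$ is below a constant (depending on $\delta,t,h,\ell,k$) times $m^{h+1}$. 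Combining the two pieces gives $\Pr[\mathrm{JR}]\le(k_0/\ell)^{B}+\delta$.

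The main obstacle I expect is making the second step fully rigorous: formalizing ``the transcript is a function of the placement only,'' including the bookkeeping that $W$ is conditionally independent of the bijections given the placement, and being careful that although the transcript does reveal each block's identity and per-candidate marginals, it provably reveals nothing about which candidate of a block plays the role $s^*$. A secondary delicate point is the AM–GM budget accounting, which is exactly where the floor $\lfloor k/k_0\rfloor$ enters rather than a larger exponent: the $r=k\bmod k_0$ forced dummies are what cap the committee's usable budget inside the blocks at $k_0 B$.
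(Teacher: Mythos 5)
Your proposal is correct and follows essentially the same route as the paper's proof: the identical adversarial construction (randomly placed $\ell$-candidate blocks carrying the distribution $\{x_S\}$ plus $r$ forced ``dummy'' candidates), the same information-theoretic symmetry argument that condition (1) hides $s^*$ whenever no query meets a block in more than $h$ candidates, the same budget/AM--GM bound giving $(k_0/\ell)^{\lfloor k/k_0\rfloor}$, and a union bound showing that $o(m^{h+1})$ queries ``crack'' a block with probability at most $\delta$. The only differences are cosmetic\,---\,you place the blocks uniformly in $[m]$ rather than inside a fixed partition into parts, and you bound the crack probability directly per query rather than by counting the $\ell$-sets each query can $h$-cover, as the paper does.
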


\begin{proof}
    Given such a probability distribution $\{x_S\}_{S \subseteq [\ell]}$, we define the query adversary as follows.
    This adversary will be a distribution over profiles over $[m]$ candidates, for some sufficiently large $m$ to be determined later.
    We denote a given non-adaptive $(k, t)$-committee selection algorithm by $\mathcal{A}$.

    First let $k = p k_0 + r$, where $p$ is a nonnegative integer and $0 \leq r < k_0$.
    Partition the candidates into $[m] = C_1 \cup C_2 \cup \dots \cup C_p \cup D$ where, for each $i \in [p]$, $\abs{C_i} = \lfloor (m - r)/p \rfloor$, and $D$ contains the remaining candidates, of which there are at least $r$.
    For each $i$, the adversary will randomly select a subset of $\ell$ distinct candidates $S^i := \{c^i_1, c^i_2, \dots, c^i_{\ell}\} \subseteq C_i$.
    The adversary chooses all subsets and orderings with equal probability, independently for each $C_i$.
    The adversary will then respond to all queries according to the following approval matrix.

    We partition the voters into $p + r$ distinct ``parties'' $P_1, \ldots, P_p$ and $Q_1, \ldots, Q_r$, and every voter is a member of exactly one party.
    Each party $P_i$ contains a $k_0/k$ proportion of voters, and voters in $P_i$ approve only of some subset of the candidates contained in $S^i \subseteq C_i$, and none of the other candidates.
    For these $P_i$, for all $S \subseteq [\ell]$, let the fraction of voters whose approval set is exactly $\{c^i_s \suchthat s \in S\}$ be equal to $x_S$.
    Every party $Q_j$ is a $1/k$ proportion of the voters, and each voter belonging to $Q_j$ approves only of one candidate $d_j \in D$ which is specific to $Q_j$.


    Let us say that $\mathcal{A}$ \emph{$h$-covers} a given set of candidates $S\subset [m]$ if $\mathcal{A}$ ever submits a query $T\subseteq [m]$ such that $\abs{T \cap S} > h$.
    If, for any of the parties $P_i$ with $i \in [p]$, the algorithm fails to $h$-cover the set $S^i$, then condition~\ref{itmGeneralLPMarginalConstraint} implies that all $\ell$ of these candidates are completely symmetric (i.e. indistinguishable) to $\mathcal{A}$ given all of its query responses.
    Since each of the distinguished candidates $c^i_{s^*}$ is distributed uniformly at random among the candidates $S^i$, $\mathcal{A}$ selects $c^i_{s^*}$ to be part of its chosen committee with probability at most $\min(k_i/\ell, 1)$, where $k_i$ is the number of candidates that $\mathcal{A}$ selects from $P_i$.

    However, in order to satisfy JR, $\mathcal{A}$ must select at least $r$ candidates from $D$, since there are $r$ distinct candidates in $D$ approved by the $r$ parties $Q_j$, which are disjoint fractions of $1/k$ of the voters.
    In order to satisfy JR $\mathcal{A}$ must also select the distinguished candidate $c^i_{s^*} \in C_i$ for each party $P_i$, since condition~\ref{itmGeneralLPUnhappyConstraint} implies that for each $P_i$ at least a $\frac{1}{k_0} \cdot \frac{k_0}{k} = \frac{1}{k}$ fraction of the voters approve only $c^i_{s^*}$ and none of the other candidates.

    This already implies \ref{itmGeneralLPSmallTConclusion}: assuming that $\mathcal{A}$ selects at least $r$ candidates from $D$, then if $t \leq h$, it is impossible for $\mathcal{A}$ to $h$-cover $S^i$ with any number of queries, and thus $\mathcal{A}$ succeeds in satisfying JR with probability at most
    $$
        \Pr[\mathcal{A} \text{ selects } c_{s^*}^i \text{, for all } i \in [p]] \leq \frac{k_1}{\ell} \cdot \frac{k_2}{\ell} \cdot \dots \cdot \frac{k_p}{\ell} = \frac{k_1 k_2 \dots k_p}{\ell^p} \leq \frac{k_0^p}{\ell^p} = \left(\frac{k_0}{\ell}\right)^p.
    $$
    Here the second inequality holds due to the constraint that $k_1 + k_2 + \dots + k_p \leq k - r = p k_0$, since $\mathcal{A}$ must select at least $r$ candidates from $D$.

    \medskip
    To prove \ref{itmGeneralLPLargeTConclusion}, we must analyze the likelihood that an algorithm $\mathcal{A}$ making a small number of queries $h$-covers any given $S^i$.
    Let us suppose that $\mathcal{A}$ knows the partition of candidates into $C_1 \cup C_2 \cup \dots \cup C_p \cup D$, knows everything about the approval matrix except for which sets $S^i$ were chosen within each party $P_i$, and is allowed to make at most $cm^{h + 1}$ queries within each party $P_i$, separately, where
    $$
        c := \frac{\delta}{2^{t + \ell} \: \ell! \: p^{h + 2}}.
    $$
    Clearly, these assumptions only make the algorithm $\mathcal{A}$ stronger: an impossibility for this kind of algorithm implies the desired lower bound.
    Fix a party $P_i$.
    For sufficiently large $m$, every set $S\subseteq C$ of $\ell$ candidates that is $h$-covered by a query $T\subseteq [m]$ of size $t$ can be decomposed into two parts: a set of size $j$ (where $h + 1 \leq j \leq t$) that is contained in $T$, and a set of size $\ell - j$ that is contained in $C_i \setminus T$.
    Thus, the number of sets $S$ of size $\ell$ within $C_i$ that any single query can $h$-cover is exactly
    $$
        \sum_{j = h + 1}^{t} \binom{t}{j} \binom{\lfloor (m - r) / p \rfloor - t}{\ell - j} \leq 2^t \binom{m/p}{\ell - h - 1} \leq 2^t \left(\frac{m}{p}\right)^{\ell - (h + 1)}
    $$
    provided that $\ell - h - 1 \leq m/2p$, which holds for sufficiently large $m$.

    Since we have that $\mathcal{A}$ made at most $cm^{h + 1}$ queries within party $P_i$ by assumption, at most
    $$
        cm^{h + 1} \cdot 2^t \left(\frac{m}{p}\right)^{\ell - (h + 1)} = \frac{2^t}{p^{\ell - (h + 1)}} cm^\ell
    $$
    sets of size $\ell$ can be $h$-covered.
    Since, within each party $P_i$ there are a total of
    $$
        \binom{\lfloor (m - r) / p \rfloor}{\ell} \geq \frac{(\lfloor (m - r) / p \rfloor - \ell)^\ell}{\ell!} \geq \frac{(m/(2p))^\ell}{\ell!}
    $$
    (for sufficiently large $m$) sets of size $\ell$ in total, and each one of them is chosen to be $S_i$ by the adversary with equal probability, the likelihood that $\mathcal{A}$ $h$-covered $S_i$ is at most
    $$\frac{2^t cm^\ell \ell!}{p^{\ell - (h + 1)}(m/(2p))^\ell} = 2^{t + \ell} \ell! p^{h + 1} c = \frac{\delta}{p}.$$
    It follows from a union bound over all $p$ of the parties $P_i$ that the probability that $\mathcal{A}$ $h$-covered \emph{any} of the $S_i$ is at most $\delta$.
    For $\mathcal{A}$ to satisfy JR, it is necessary for it to either $h$-cover some $S_i$ with the initial queries or subsequently select every $c^i_{s^*}$ after having failed to $h$-cover any $S_i$.
    By the union bound, the probability that $\mathcal{A}$ satisfies JR is at most the sum of the probabilities of these two events, which is at most $\left(\frac{k_0}{\ell}\right)^p + \delta$.
\end{proof}

Thus, to prove lower bounds against non-adaptive algorithms with exact queries, it suffices to construct probability distributions over subsets of a finite set $[\ell]$ with certain special properties. To prove our $\Omega(m^2)$ lower bound, which holds for any $k \geq 2$, we generalize the construction from Figure~\ref{figMergedCounterexamples} (a) by simply adding more candidates to the larger approval set:

\ExactLowerSquare*

\begin{proof}
    Given any $t$ and $\varepsilon \in (0, 1]$, let $h = 1$, $k_0 = 2$, and $\ell = \lceil 4/\varepsilon \rceil$. Consider the probability distribution over subsets of $[\ell]$ where $x_{\{1\}} = \frac12$, $x_{\{2, 3, 4, \dots, \ell\}} = \frac12$, and all other sets have probability zero (Figure~\ref{figMergedCounterexamples} (a) shows the special case of this distribution where $\ell = 3$).
    Notice that these parameters meet all the requirements of Lemma~\ref{lemLowerBounds} with $s^* = 1$. Letting $\delta := \varepsilon/2$, it follows that, for any $k \geq k_0 = 2$, any $(k, t)$-committee selection algorithm that makes fewer than $\Omega(m^2)$ queries satisfies JR with probability at most
    \begin{equation*}
        \left(\frac{k_0}{\ell}\right)^{\lfloor k / k_0 \rfloor} + \delta \leq \left(\frac{k_0}{\ell}\right)^{1} + \delta = \frac{2}{\ell} + \delta \leq \frac{\varepsilon}{2} + \frac{\varepsilon}{2} = \varepsilon. \qedhere
    \end{equation*}
\end{proof}

To prove stronger lower bounds we need to increase the $h$ parameter. Probability distributions $\{x_S\}_{S \subseteq [\ell]}$ satisfying the hypotheses of Lemma~\ref{lemLowerBounds} prove difficult to construct by hand for $h > 1$, so we conducted a computational search. By a straightforward averaging argument, one can see that it is without loss of generality to consider ``symmetric'' distributions, where for any sets $S, T \subseteq [\ell]$ of the same size that either both contain $s^*$ or both do not contain $s^*$, $x_{S} = x_{T}$. Thus, it suffices to consider solutions encoded as points in the following polyhedron, which we refer to as $P(h, k_0, \ell) \subseteq \rr^{2\ell}$. We parameterize the space by the $2\ell$ variables
$$\{x_{i, j} \suchthat i \in \{0, 1\}, j \in \{0, 1, 2, \dots, \ell - 1\}\},$$
where $x_{0, j}$ encodes the value of $x_S$ for all $S$ of size $j$ that do not include $s^*$, and $x_{1, j}$ encodes the value of $x_S$ for all $S$ containing $s^*$ and $j$ other elements from $[\ell]$. For a solution to be in $P(h, k_0, \ell)$, there are four kinds of constraints it must satisfy.
\begin{itemize}
    \item All probabilities must be nonnegative: for all $i \in \{0, 1\}$ and $j \in \{0, 1, 2, \dots, \ell - 1\}$,
          \begin{equation*}
              x_{i, j} \geq 0.
          \end{equation*}
    \item Probabilities must all sum to 1:
          \begin{equation*}
              \sum_{i = 0}^1 \sum_{j = 0}^{\ell - 1} \binom{\ell - 1}{j} x_{i, j} = 1.
          \end{equation*}
    \item Condition~\ref{itmGeneralLPMarginalConstraint} from Lemma~\ref{lemLowerBounds} must be satisfied. Due to the symmetry that is baked in to the solutions we're considering, we only need to check the constraint for pairs of sets where $s^*$ is contained in one set but not the other. This constraint is as follows: for all $t' \in [h]$,
          \begin{equation*}
              \sum_{i = 0}^1 \sum_{j = t'}^{\ell - 1} \binom{\ell - 1 - t'}{j - t'} x_{i, j} = \sum_{j = t' - 1}^{\ell - 1} \binom{\ell - t'}{j - t' + 1} x_{1, j}.
          \end{equation*}
    \item Condition~\ref{itmGeneralLPUnhappyConstraint} from Lemma~\ref{lemLowerBounds} must be satisfied:
          \begin{equation*}
              x_{1, 0} \geq \frac{1}{k_0}.
          \end{equation*}
\end{itemize}

Thus, there exists a probability distribution satisfying the hypotheses of Lemma~\ref{lemLowerBounds} if and only if $P(h, k_0, \ell)$ is nonempty. The description of this polyhedron is only of polynomial-size, so we can solve it efficiently using linear programming. However, many of the coefficients are extremely large, and we eventually ran into numerical difficulties. Table~\ref{tabLowerBoundsComputationalSearch} lists the tightest lower bounds we were able to obtain in terms of how large $k_0$ had to be for a given value of $h$, and Table~\ref{tabM11Solution} provides one point in $P(10, 72, 73)$ as an example.

\begin{table}[H]\centering\begin{tabular}{r|c c c c c c c c c c c}
        $h$   & 1 & 2 & 3  & 4  & 5  & 6  & 7  & 8  & 9  & 10 \\\hline
        $k_0$ & 2 & 6 & 10 & 16 & 21 & 30 & 38 & 49 & 59 & 72
    \end{tabular}\caption{\label{tabLowerBoundsComputationalSearch}For each positive integer $h$, smallest value of $k_0$ for which $P(h, k_0, k_0 + 1)$ is nonempty, i.e., the smallest committee size for which Lemma~\ref{lemLowerBounds} implies that guaranteeing JR requires $\Omega(m^{h + 1})$ exact queries of size $t > h$. The constraints are tight only for $h \in \{1, 2\}$.}\end{table}

\newcommand{\te}[2]{{#1}e-{#2}}
\bgroup
\setlength\tabcolsep{0.37em}
{\tiny\begin{table}[H]\centering\begin{tabular}{c c c c c c c c c c c}
            $x_{1, 0}$ & $x_{0, 2}$    & $x_{1, 5}$    & $x_{0, 13}$    & $x_{1, 20}$    & $x_{0, 31}$    & $x_{1, 41}$    & $x_{0, 52}$    & $x_{1, 59}$    & $x_{0, 67}$    & $x_{1, 70}$   \\\hline
            0.01398    & \te{3.204}{4} & \te{1.184}{9} & \te{1.012}{15} & \te{4.781}{20} & \te{7.926}{23} & \te{5.799}{23} & \te{1.875}{20} & \te{2.058}{16} & \te{8.968}{11} & \te{4.577}{6}
        \end{tabular}\caption{\label{tabM11Solution}The point in $P(10, 72, 73)$ maximizing $x_{1, 0}$. All variables not shown in the table have value zero.}\end{table}}
\egroup

\ExactLowerNEleven*

\begin{proof}
    Let $\varepsilon > 0$ be given. Then let
    $$k := 72\left(\frac{\log(2/\varepsilon)}{\log(73/72)} + 1\right)$$
    and $\delta := \varepsilon/2$. As Tables~\ref{tabLowerBoundsComputationalSearch} and \ref{tabM11Solution} show, Lemma~\ref{lemLowerBounds} holds for $h = 10$, $k_0 = 72$, and $\ell = 73$. Thus, for any $t$, any non-adaptive $(k, t)$-committee selection algorithm that makes fewer than $\Omega(m^{11})$ queries satisfies JR with probability at most
    \begin{equation*}
        \left(\frac{72}{73}\right)^{\lfloor k / 72 \rfloor} + \delta \leq \left(\frac{72}{73}\right)^{(k / 72) - 1} + \delta = \frac{\varepsilon}{2} + \frac{\varepsilon}{2} = \varepsilon. \qedhere
    \end{equation*}
\end{proof}

We note that there is a gap between these results and Theorem \ref{thmExactQueryUpper}. An intriguing direction for future work is to obtain matching upper and lower bounds for the query complexity of guaranteeing JR using a non-adaptive algorithm with exact queries. For $k = 1$ we need $\Theta(m)$ queries, and for $k \in \{2, 3\}$, we need $\Theta(m^2)$ queries. However, the complexity is unknown for all larger $k$, and we conjecture that the exponent of $m$ grows as a polynomial function of $k$.

\section{Family of examples for noisy vs exact queries}\label{appNoisyVsExact}

Fix some $k \ge 4$, $t$, and $m$. We will construct a family of instances on $m$ candidates where there exists a non-adaptive exact-query algorithm which can guarantee JR using $\ceil*{m/t}$ queries while a non-adaptive noisy-query algorithm necessarily needs $\Omega(m\log(m)/t)$ to guarantee it with any fixed probability $\delta$. We describe the approval profile by the distribution over approval sets by sampling a voter uniformly at random. There is one special candidate $a^*$. This candidate $a^*$ is approved by a $2/k$ fraction of the electorate while all other candidates $b$ are approved by $1/(2k)$. Further, these approvals are independent in the sense that when we sample a voter, the joint distribution over approvals is as if each of these approvals were selected independently. For example, given a set $S \subseteq C$ of candidates such that $S$ contains $a^*$ and $\ell$ other candidates, the proportion of voters who approve exactly the set $S$ is $(2/k) \cdot (1/(2k))^\ell \cdot (1 - 1/(2k))^{m - 1 - \ell}$.

The first observation we make is that the committees that satisfy JR are exactly those that include $a^*$. Notice that if a committee includes $a^*$, no other candidate has enough approval support ($1/k$) to have a blocking coalition to violate JR. On the other hand, if there is a committee $W$ of size $k$ that does not include $a^*$, we can compute the proportion of voters that approve $a^*$ that do not approve of any candidates in $W$. This is
\[
    \frac{2}{k} \cdot \left(1 - \frac{1}{2k} \right)^k > \frac{2}{k} \left(1 - \frac{k}{2k} \right) = \frac{1}{k}.
\]
Hence, for such a $W$, there would exist a sufficiently large blocking coalition for $a^*$.

Next, we show that there is a non-adaptive exact-query algorithm that can guarantee JR for any instance of this form (i.e., regardless of which candidate is $a^*$). Indeed, it simply makes $\ceil*{m/t}$ queries that cover all candidates. From this, it can deduce candidate approval scores and ensure that the committee it chooses contains the candidate with approval score $2/k$.

Finally, let us consider a non-adaptive algorithm that makes $\ell$ queries and guarantees JR with probability $1 - \delta$ regardless of which candidate is $a^*$. Notice that such an algorithm should guarantee JR with this same probability against a distribution of instances where $a^*$ is selected uniformly at random. Let us consider an algorithm $A$ that maximizes the probability of selecting a JR committee against this distribution. Notice that it is without loss of generality that $A$ is deterministic by Yao's minimax principle. We show that this can only be done if $\ell \ge f(m)$ where $f(m) \in \Omega(m \log m)$ (treating $k$, $t$, and $\delta$ as constants) is a function to be defined later.

Suppose for a contradiction $\ell < f(m)$. Let $H$ be the set of candidates that appear in strictly more than $q := \frac{2tf(m)}{m\delta}$ and let $L$ be the remaining candidates. Notice that $|H| \le \delta / 2 \cdot m$, as otherwise $\ell \ge f(m)$. We show that conditioned on $a^* \in L$, the probability $A$ chooses a committee containing $a^*$ is at most $\delta / 2$.  This implies that $A$'s probability of success is at most $(1 - \delta / 2) \cdot \delta / 2 + \delta / 2 < \delta$.

To that end, consider an algorithm that receives extra queries such that all candidates in $L$ are in exactly $q$ queries. Notice that conditioned on $a^*$ being in $L$, since all candidates in $L$ are in the same number of queries, the optimal strategy to maximize the probability $a^*$ is a committee-member is to take the $k$ candidates in $L$ with highest empirical approval score. Indeed, this dominates any other strategy as conditioned on any empirical approval scores, this choice of committee covers the maximum likelihood estimates of the underlying distribution.

What we finally show is that with probability at least $1 - \delta / 2$, conditioned on $a^* \in L$, $a^*$ will \emph{not} be among the $k$ highest approval scores. Intuitively, with reasonably high probability $a^*$ will have empirical not too much more than it's true approval, say at most $3/k$, while, by choosing $q \in O(\log m)$, due to the noise in estimating empirical approval scores, at least $k$ of the remaining candidates in $L$ will have approval score this large. Indeed, ensuring $q > \frac{k^2\log(4/\delta)}{2}$ ensures the empirical estimate of $a^*$ is less than $3/k$ with probability at least $1 - \delta / 4$ using standard Hoeffding's inequality. For the other emprical means, using tail bounds on the Binomial distribution~\citep{informationtheory}, the probability they are at least $3/k$ is at least $\frac{1}{\sqrt{2q}} \exp(-q \Theta(k))$. Notice we can choose $q \in \Theta(\log m)$ such that this value is at least $2(k + \log(4/\delta))/m$. For sufficiently large $m$, this choice of $q$ will be above $\frac{k^2\log(4/\delta)}{2}$, leading to a valid $\Theta(m \log m)$ function of $f$. Further, again applying standard Hoeffding's inequality on the remaining at least $m/2$ candidates in $L$ shows that at least $k$ will satisfy this, as needed.

\section{Proofs for the Adaptive Exact-Query Setting\label{sec:proofs}}

In the following, we prove  \Cref{lemLowDeltaImpliesStuff}, \Cref{lemPotentialIncrease}, and \Cref{thmExactQueryUpper}.

\subsection{Proof of \Cref{lemLowDeltaImpliesStuff}}
\Cref{lemLowDeltaImpliesStuff} and its proof are based on the lower bound from Lemma 1 in \cite{Skowron2021-ys}. Our result is more general in two ways: (1) our statement holds for any committee $W$, no matter what algorithm computed it, and (2) we introduce an approximation parameter $\alpha$.
We begin with the following intermediate lemma:
\begin{lem}
    \label{lem:low_delta_is_good}For any committee $W \subseteq C$ and group of voters
    $V\subseteq N$, we have
    \[
        \text{avs}_{W}(V)\ge\min\left\{ \left|\bigcap_{i\in V}A_{i}\right|,\frac{1}{n} \cdot \frac{|V|}{\pavd^{*}(W)}-1\right\} .
    \]
\end{lem}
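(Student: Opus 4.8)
The plan is to prove the inequality directly, by expanding the PAV-score marginal gain and then applying a Cauchy--Schwarz (equivalently, arithmetic-mean/harmonic-mean) inequality. First I would record the elementary identity that, for any committee $W$ and any candidate $c \notin W$,
\[
    \pavd(W,c) \;=\; \frac{1}{n}\sum_{i \in N \,:\, c \in A_i}\frac{1}{|A_i \cap W| + 1},
\]
which follows at once from writing $\pavsc$ as an average of harmonic numbers: adding $c$ to $W$ increases $|A_i \cap W|$ by exactly one for each voter $i$ who approves $c$ (and changes nothing for the others), contributing $\tfrac{1}{|A_i \cap W| + 1}$ to the score per such voter.

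Next I would split into two cases according to the set $T := \bigcap_{i \in V} A_i$ of commonly approved candidates. If $T \subseteq W$ (this in particular covers $T = \emptyset$), then every voter $i \in V$ satisfies $T \subseteq A_i \cap W$, hence $|A_i \cap W| \ge |T|$; averaging over $V$ gives $\text{avs}_{W}(V) \ge |T| = \big|\bigcap_{i\in V}A_i\big|$, which already dominates the claimed minimum. If instead $T \not\subseteq W$, fix some $c^{*} \in T \setminus W$. Because $c^{*}$ is approved by every voter of $V$, the identity above gives
\[
    \pavd^{*}(W) \;\ge\; \pavd(W, c^{*}) \;\ge\; \frac{1}{n}\sum_{i \in V}\frac{1}{|A_i \cap W| + 1}.
\]

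To finish the second case I would apply Cauchy--Schwarz in the form $\big(\sum_{i \in V} b_i\big)\big(\sum_{i \in V} \tfrac{1}{b_i}\big) \ge |V|^{2}$ with $b_i := |A_i \cap W| + 1 > 0$, together with the identity $\sum_{i \in V} b_i = |V|\,(\text{avs}_{W}(V) + 1)$, obtaining $\pavd^{*}(W) \ge \tfrac{1}{n}\cdot\tfrac{|V|}{\text{avs}_{W}(V) + 1}$. Since $V \neq \emptyset$, the sum on the right above is strictly positive, so $\pavd^{*}(W) > 0$ and this rearranges to $\text{avs}_{W}(V) \ge \tfrac{1}{n}\cdot\tfrac{|V|}{\pavd^{*}(W)} - 1$, again at least the claimed minimum.

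The one point I would be careful about is the case split itself: it is tempting to pick $c^{*}$ anywhere in $\bigcap_{i \in V} A_i$, but if that candidate is already in $W$ the marginal gain $\pavd(W, c^{*})$ vanishes and the argument breaks; handling $T \subseteq W$ separately --- which is exactly why the bound takes a minimum with $\big|\bigcap_{i\in V}A_i\big|$ --- is the only genuinely necessary subtlety, and everything else is a routine computation.
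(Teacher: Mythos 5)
Your proposal is correct and follows essentially the same route as the paper's proof: both hinge on the identity $\pavd(W,c)=\frac{1}{n}\sum_{i:c\in A_i\setminus W}\frac{1}{|A_i\cap W|+1}$, the choice of a commonly approved candidate outside $W$, and the AM--HM/convexity bound $\sum_{i\in V}\frac{1}{b_i}\ge\frac{|V|^2}{\sum_{i\in V}b_i}$. The only difference is presentational\,---\,you argue directly via a case split on whether $\bigcap_{i\in V}A_i\subseteq W$, while the paper argues by contradiction\,---\,and your handling of the $T\subseteq W$ case and of $\pavd^*(W)>0$ is sound.
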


\begin{proof}
    As mentioned, the following proof is closely related to the proof of Lemma 1 of \citet{Skowron2021-ys}.
    Suppose there exist $V$ and $W$ such that both
    \[
        \frac{1}{|V|}\sum_{i\in V}|W\cap A_{i}|<\left|\bigcap_{i\in V}A_{i}\right|\text{ and }\frac{1}{|V|}\sum_{i\in V}|W\cap A_{i}|<\frac{1}{n} \cdot \frac{|V|}{\pavd^{*}(W)}-1.
    \]
    We then have
    \begin{align*}
        \left|\bigcap_{i\in V}A_{i}\right| & >\frac{1}{|V|}\sum_{i\in V}|W\cap A_{i}|                                          \\
                                           & \ge\frac{1}{|V|}\sum_{i\in V}\left|W\cap\left(\bigcap_{j\in V}A_{j}\right)\right| \\
                                           & =\left|W\cap\left(\bigcap_{j\in V}A_{j}\right)\right|.
    \end{align*}
    This implies $W\cap\left(\bigcap_{i\in V}A_{i}\right) \subsetneq \bigcap_{i\in V}A_{i}$, so
    \(
    \overline{W} \cap \left(\bigcap_{i\in V}A_{i} \right)  \ne \emptyset\). Hence, there is a candidate $c \in \overline{W} \cap \left(\bigcap_{i\in V}A_{i} \right)$ that is not on the committee $W$, but is approved by all voters in $V$. For such a
    candidate $c$, we have
    \begin{align*}
        \pavd(W,c) & =\frac{1}{n}\sum_{i\in N:c\in A_{i}\setminus W}\frac{1}{|A_{i}\cap W|+1}                                   \\
                   & =\frac{1}{n}\sum_{i\in N:c\in A_{i}}\frac{1}{|A_{i}\cap W|+1}            \tag{$c \notin W$}                \\
                   & \ge\frac{1}{n}\sum_{i\in V}\frac{1}{|A_{i}\cap W|+1} \tag{$c\in\bigcap_{i\in V}A_{i}$}                     \\
                   & \ge\frac{1}{n}|V|\frac{1}{\frac{1}{|V|}\sum_{i\in V}(|W\cap A_{i}|+1)} \tag{convexity of \ensuremath{1/x}} \\
                   & >\frac{1}{n}|V|\frac{1}{\frac{1}{n}\frac{|V|}{\pavd^{*}(W)}-1+1}                                           \\
                   & =\pavd^{*}(W),
    \end{align*}
    a contradiction, as $\Delta(W, x) \le \Delta^*(W)$ for all candidates $x$.
\end{proof}
We are now ready to prove \Cref{lemLowDeltaImpliesStuff}, which we
restate here:
\lemLowDeltaImpliesStuff*
\begin{proof}
    Fix a committee $W$ satisfying $\Delta^*(W) < \frac{1}{\alpha k}$. We begin with $\alpha$-OAS. Fix a $\lambda \in [0, k]$, and a $\frac{\lambda + 1}{\alpha}$-large, $\lambda$-cohesive group of voters $V$. By definition of $\lambda$-cohesive, $\left|\bigcap_{i \in V} A_i \right| \ge \lambda$. Further, we have
    \begin{align*}
        \frac{1}{n} \cdot \frac{|V|}{\Delta^*(W)} - 1
         & \ge \frac{1}{n} \cdot \frac{1}{\Delta^*(W)} \cdot \frac{\lambda + 1}{\alpha}\cdot \frac{n}{k} - 1 \tag{$V$ is $\frac{\lambda + 1}{\alpha}$-large} \\
         & = \frac{1}{\Delta^*(W)} \cdot \frac{\lambda + 1}{\alpha}\cdot \frac{1}{k} - 1                                                                     \\
         & > \lambda + 1 - 1 \tag{$\Delta^*(W) < \frac{1}{\alpha k}$}                                                                                        \\
         & = \lambda
    \end{align*}
    Together, these imply that
    \[
        \min\left\{ \left|\bigcap_{i\in V}A_{i}\right|,\frac{1}{n} \cdot \frac{|V|}{\pavd^{*}(W)}-1\right\} \ge \lambda.
    \]
    Invoking \Cref{lem:low_delta_is_good}, we have $\text{avs}_{W}(V) \ge\lambda$, as needed.

    Next, we show $\alpha$-EJR. Fix an $\ell \in [k]$, and an $\frac{\ell}{\alpha}$-large, $\ell$-cohesive group of voters $V$. As before, by the definition of $\ell$-cohesive, we have $\left|\bigcap_{i\in V}A_{i}\right| \ge \ell$. Further, by the same argument as above with $\ell = \lambda + 1$,
    \[
        \frac{1}{n} \cdot \frac{|V|}{\Delta^*(W)} - 1 > \ell - 1.
    \]
    Together, these imply that
    \[
        \min\left\{ \left|\bigcap_{i\in V}A_{i}\right|,\frac{1}{n} \cdot \frac{|V|}{\pavd^{*}(W)}-1\right\} > \ell - 1.
    \]
    Invoking \Cref{lem:low_delta_is_good}, we have $\text{avs}_{W}(V) > \ell - 1$, and since utilities are integers, this implies that $|A_i \cap W| \ge \ceil*{\text{avs}_{W}(V)} \ge  \ell$ for at least one voter $i \in V$, as needed.
\end{proof}

\subsection{Proof of \Cref{lemPotentialIncrease}}
\Cref{lemPotentialIncrease} here:

\lemPotentialIncrease*
\begin{proof}
    Fix $W$ and $c \notin W$. We will use the notation $W^+ := W \cup \set{c}$.
    First, we show that
    \begin{equation}
        \label{ineq:bound-delta}
        \min_{x \in W} \Delta(W^+ \setminus \set{x}, x) \le \frac{1 - \Delta(W, c)}{k}.
    \end{equation}
    To that end, let us consider $\Delta(W^+ \setminus \set{x}, x)$ for an arbitrary $x \in W^+$. We have
    \begin{align*}
        \Delta(W^+ \setminus \set{x}, x)
         & = \pavsc(W^+) - \pavsc(W^+ \setminus \set{x})                     \\
         & = \frac{1}{n} \sum_{i \in N: x \in A_i} \frac{1}{|W^+ \cap A_i|}.
    \end{align*}
    Adding up over all $x \in W^+$, we have
    \begin{align*}
        \sum_{x \in W^+} \Delta(W^+ \setminus \set{x}, x)
         & = \frac{1}{n} \sum_{x \in W^+} \sum_{i \in N: x \in A_i} \frac{1}{|W^+ \cap A_i|}          \\
         & =  \frac{1}{n} \sum_{i : W^+ \cap A_i \ne \emptyset} \frac{|W^+ \cap A_i|}{|W^+ \cap A_i|} \\
         & \le 1.
    \end{align*}
    On the other hand, we have
    \begin{align*}
        \sum_{x \in W^+} \Delta(W^+ \setminus \set{x}, x)
         & = \Delta(W, c) + \sum_{x \in W} \Delta(W^+ \setminus \set{x}, x)             \\
         & \ge \Delta(W, c) + k \cdot  \min_{x \in W} \Delta(W^+ \setminus \set{x}, x).
    \end{align*}
    Combining these two inequalities, we get that
    \[\Delta(W, c) + k \cdot  \min_{x \in W} \Delta(W^+ \setminus \set{x}, x) \le 1.\]
    Rearranging yields \eqref{ineq:bound-delta}. Finally, notice that
    \[
        \max_{x \in W} \Delta(W, c, x) = \Delta(W, c) - \min_{x \in W} \Delta(W^+ \setminus \set{x}, x).
    \]
    Hence, by \eqref{ineq:bound-delta},
    \[
        \max_{x \in W} \Delta(W, c, x)
        \ge \Delta(W, c) - \left(\frac{1 - \Delta(W, c)}{k} \right)
        = \frac{(k + 1) \Delta(W, c) - 1}{k},
    \]
    as needed.
\end{proof}

\subsection{Proof of \Cref{thmExactQueryUpper}}

\GlsPavGuarantees*
\begin{proof}
    Clearly, if \Cref{alg:a-pav} terminates, the resulting committee $W$ satisfies
    \[
        \pavd^{*}(W)=\pavd(W,c')<\frac{1}{\alpha k}
    \]
    and hence $\alpha$-EJR and $\alpha$-OAS by \Cref{lemLowDeltaImpliesStuff}.
    What remains is to bound how many steps the algorithm takes to terminate. To do this, we use the PAV score of the current committee as a potential function. In every iteration of the loop for which the algorithm does not terminate, we have
    \[
        \pavd(W,c')\ge\frac{1}{\alpha k}
    \]
    and hence, by \Cref{lemPotentialIncrease}, the increase in PAV
    score at each step will be
    \[
        \max_{x}\pavd(W,c',x)\ge\frac{(1-\alpha)k+1}{\alpha k^{2}}.
    \]
    Notice that the minimum and maximum PAV score that can be possibly attained
    by any committee of size $k$ are $0$ (when nobody approves of any
    candidate) and $H_{k}$ (the harmonic number which is attained when everyone approves of every candidate)
    respectively. Hence, there can be at most
    \[
        \frac{\alpha k^{2}}{(1-\alpha)k+1}H_{k}
    \]
    steps. Since at each step we make
    \(
    \ceil*{\frac{m-k}{t-k}}
    \)
    queries, the result follows.
\end{proof}

\section{PAV and LS-PAV Yield a Committee That Satisfies $\pavd^{*}(W)<1/k$\label{sec:pav_proofs}}

As mentioned previously, it is known that PAV and LS-PAV satisfy both
EJR and OAS. Here, we show that they yield committees that satisfy
$\pavd^{*}(W)<1/k$, which implies EJR and OAS through
\Cref{lemLowDeltaImpliesStuff}. This is noteworthy because 1) it
has a much simpler proof, 2) it implies that PAV and LS-PAV committees
can be certified in a computationally efficient manner, by verifying
that $\pavd^{*}(W)<1/k$.
\begin{lem}
    For both PAV and LS-PAV, the returned committee $W$
    satisfies $\pavd^{*}(W)<1/k$.
\end{lem}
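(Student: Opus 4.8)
The plan is to derive both statements as direct corollaries of \Cref{lemPotentialIncrease}. For any size-$k$ committee $W$ and any $c\notin W$, that lemma gives
\[
    \max_{x\in W}\pavd(W,c,x)\ \ge\ \frac{(k+1)\pavd(W,c)-1}{k},
\]
so any upper bound on the left-hand side translates into an upper bound on $\pavd(W,c)$. Since $\pavd(W,c)=0$ whenever $c\in W$, it suffices to control $\pavd(W,c)$ for $c\notin W$ in order to bound $\pavd^{*}(W)=\max_{c\in C}\pavd(W,c)$. Thus for each of the two rules I would simply identify the upper bound on $\max_{x\in W}\pavd(W,c,x)$ furnished by that rule's defining property and plug it in.

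For \textsc{pav}, the returned committee $W$ maximizes $\pavsc$ over all committees of size $k$, so no single swap can improve the score: for every $c\notin W$ and $x\in W$ we have $\pavd(W,c,x)=\pavsc(W\cup\set{c}\setminus\set{x})-\pavsc(W)\le 0$, hence $\max_{x\in W}\pavd(W,c,x)\le 0$. Substituting into the displayed inequality yields $(k+1)\pavd(W,c)-1\le 0$, i.e. $\pavd(W,c)\le\tfrac{1}{k+1}<\tfrac{1}{k}$ for every $c\notin W$, and therefore $\pavd^{*}(W)<\tfrac1k$. For \textsc{ls-pav}, I would instead invoke its termination condition: when it halts there is no pair $c\notin W$, $x\in W$ with $\pavd(W,c,x)\ge\tfrac{1}{k^{2}}$, so $\max_{x\in W}\pavd(W,c,x)<\tfrac{1}{k^{2}}$ for every $c\notin W$. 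Combining with \Cref{lemPotentialIncrease} gives $\tfrac{(k+1)\pavd(W,c)-1}{k}<\tfrac{1}{k^{2}}$, which rearranges to $\pavd(W,c)<\tfrac{1+1/k}{k+1}=\tfrac1k$; again ranging over all $c\notin W$ (and using $\pavd(W,c)=0$ for $c\in W$) yields $\pavd^{*}(W)<\tfrac1k$.

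I do not expect a genuine obstacle here — the lemma is essentially a one-line consequence of \Cref{lemPotentialIncrease} once it is phrased correctly. The only points requiring care are (i) keeping the argument order of $\pavd(W,\cdot,\cdot)$ consistent, with the middle slot being the added candidate and the last the removed one, which matches both \textsc{ls-pav}'s stopping rule and the statement of \Cref{lemPotentialIncrease}; and (ii) observing that it is enough to range over $c\notin W$, since adding a candidate already in $W$ leaves the score unchanged. Finally, the claimed \textsc{ejr} and \textsc{oas} guarantees for \textsc{pav} and \textsc{ls-pav} then follow immediately by applying \Cref{lemLowDeltaImpliesStuff} with $\alpha=1$.
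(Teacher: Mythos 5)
Your proposal is correct and follows essentially the same route as the paper: both arguments feed the no-improving-swap property of PAV (optimality) and LS-PAV (the $\frac{1}{k^2}$ termination threshold) into \Cref{lemPotentialIncrease} and rearrange to get $\pavd^{*}(W)<\frac{1}{k}$. Your extra observations (the slightly stronger bound $\pavd(W,c)\le\frac{1}{k+1}$ for PAV and the remark that $\pavd(W,c)=0$ for $c\in W$) are fine but do not change the argument.
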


\begin{proof}
    For the committee $W$ computed by PAV or LS-PAV, we have that for any
    candidate $c\notin W$,
    \[
        \max_{x\in W}\pavd(W,c,x) < \frac{1}{k^{2}}
    \]
    since otherwise the PAV score of $W$ could be improved by at least
    $1/k^{2}$ by adding $c$ and removing the worst candidate. By \Cref{lemPotentialIncrease}, this implies that
    \begin{align*}
        \frac{(k+1)\pavd^{*}(W)-1}{k} & <\frac{1}{k^{2}}.
    \end{align*}
    Rearranging yields
    $\pavd^{*}(W)<\frac{1}{k}$, so the result follows from \Cref{lemLowDeltaImpliesStuff}.
\end{proof}

\section{Proof of \Cref{thm:noisy-a-pav-bound}}\label{sec:noisy-a-pav-bound}
\noisyAPAVBound*

\begin{proof}
    We first show that with probability at least $1 - \delta$, all $\hat{\Delta}$ estimates in the first $H_k \cdot \frac{2\alpha k^2}{(1 - \alpha)k + 1}$ iterations of the loop are within $\pm \varepsilon := \frac{(1 - \alpha)k + 1}{12 \alpha k^2}$ of the corresponding true $\Delta$ values. Then, we show that conditioned on these accurate estimates, the algorithm satisfies the theorem properties.

    To show the error bounds, we will use a straightforward application of Hoeffding's inequality. Indeed, when the corresponding $\ell$ voters are sampled, notice that $\hat{\Delta}$ is simply the sample mean of independent samples with expectation of the corresponding $\Delta$. Further, these samples are always proportions falling in $[-1, 1]$. Hence, any specific estimate will not be within $\pm \varepsilon$ with probability at most $2\exp(-\varepsilon^2 \ell/2)$. Note that there are $m - k$ choices of $x$ for $\hat{\Delta}(W, x)$ and $(m - k) \cdot k$ choices of $(x, y)$ pairs for $\hat{\Delta}(W, x, y)$. Hence, there are a total of $(m - k)(k + 1)$ estimates per iteration. Therefore, there are at most $H_k \cdot \frac{2\alpha k^2}{(1 - \alpha)k + 1}(m - k)(k + 1)$ estimates in the first $H_k \cdot \frac{2\alpha k^2}{(1 - \alpha)k + 1}$ iterations. A union bound tells us the probability that all estimates in these iterations are within $\pm \varepsilon$ is at least \[1 - 2\exp(-\varepsilon^2 \ell/2) \cdot H_k \cdot \frac{2\alpha k^2}{(1 - \alpha)k + 1}(m - k)(k + 1).\]
    We simply need to show that this value is at least $1 - \delta$.

    To that end, recall that
    \[\ell = \ceil*{288  \left( \frac{\alpha k ^2}{(1 - \alpha)k + 1} \right)^2 \log \left(\frac{8mk^4}{\delta} \right)}.\]
    Noting that $H_k \le k $ and $\frac{\alpha  k^2}{(1 - \alpha)k + 1} \le k^2$, we have that
    \[8mk^4 \ge 2 k \cdot (2k^2) \cdot m \cdot (2k) \ge  2 H_k \cdot \frac{2\alpha k^2}{(1 - \alpha)k + 1}(m - k)(k + 1).\]
    Hence,
    \[\ell \ge \frac{2}{\varepsilon^2} \log \left(\frac{2H_k \cdot \frac{2\alpha k^2}{(1 - \alpha)k + 1}(m - k)(k + 1)}{\delta} \right),\]
    so we have
    \[
        2\exp\left(-\varepsilon^2 \ell/2 \right) \le \frac{\delta}{H_k \cdot \frac{2\alpha k^2}{(1 - \alpha)k + 1}(m - k)(k + 1)},
    \]
    as needed.

    Next, condition on all of these estimates being accurate. Notice if the algorithm terminates within the first $H_k \cdot \frac{2\alpha k^2}{(1 - \alpha)k + 1}$ iterations, this means that for the returned committee, $\max_{x \notin W} \hat{\Delta}(W, x) < \frac{1}{\alpha k} - \frac{(1 - \alpha)k + 1}{8\alpha k^2} = \frac{1}{\alpha k} - \varepsilon$. By our assumption about the accuracy of each $\hat{\Delta}$, we have that $\Delta^*(W) = \max_{x \notin W} \hat{\Delta}(W, x) < \frac{1}{\alpha k}$. Hence, by \Cref{lemLowDeltaImpliesStuff}, $W$ satisfies the desired properties. Further, there are $\ceil*{\frac{m - k}{t -k}} \cdot \ell$ queries per iteration. Noting that
    \[
        \ell \le 289  \left( \frac{\alpha k ^2}{(1 - \alpha)k + 1} \right)^2 \log \left(\frac{8mk^4}{\delta} \right)
    \] to avoid the ceiling, this means the total query complexity is at most
    \[
        H_k \cdot \frac{2\alpha k^2}{(1 - \alpha)k + 1} \cdot \ceil*{\frac{m - k}{t -k}} \cdot \ell \le 578 H_k \ceil*{\frac{m - k}{t -k}}
        \left( \frac{\alpha k ^2}{(1 - \alpha)k + 1} \right)^3 \log \left(\frac{8mk^4}{\delta} \right)\]
    as needed.

    What remains to be shown is that conditioned on the accurate estimates, the algorithm terminates within $H_k \cdot \frac{2\alpha k^2}{(1 - \alpha)k + 1}$ iterations. Indeed, we show that each iterations, the PAV score of $W$ increases by at least $\frac{(1 - \alpha)k + 1}{2\alpha k^2}$. As the minimum and maximum PAV scores of a committee are $0$ and $H_k$ respectively, this can occur at most $H_k \cdot \frac{2\alpha k^2}{(1 - \alpha)k + 1}$ times. Hence, we obtain the desired bound on the number of iterations.

    To that end, note that when when we make a swap of $c'$ for $c$, it must be the case that $\hat{\Delta}(W, c) > \frac{1}{\alpha k } - \varepsilon$. Using our assumptions on $\hat{\Delta}$ errors, this implies that $\Delta(W, c) \ge \frac{1}{\alpha k} - 2\varepsilon$. By \Cref{lemPotentialIncrease}, we have that
    \[\max_{x \in W} \Delta(W, c, x) \ge \frac{(1 - \alpha)k + 1}{\alpha k^2} - \frac{k + 1}{k} 2 \varepsilon \ge \frac{(1 - \alpha)k + 1}{\alpha k^2} - 4 \varepsilon. \]
    Again, by our assumption on $\hat{\Delta}$ errors,
    \[\max_{x \in W} \hat{\Delta}(W, c, x) \ge \max_{x \in W} \Delta(W, c, x) - \varepsilon \ge \frac{(1 - \alpha)k + 1}{\alpha k^2} - 5 \varepsilon.
    \]
    Finally, for the choice $c'$ that maximizes $\hat{\Delta}(W, c, c')$,
    \[
        \Delta(W, c, c') \ge \hat{\Delta}(W, c, c') - \varepsilon \ge \frac{(1 - \alpha)k + 1}{\alpha k^2} - 6 \varepsilon = \frac{(1 - \alpha)k + 1}{2\alpha k^2},
    \]
    as needed.
\end{proof}

\section{Description and Analysis of  \Cref{alg:ucb-a-pav}}\label{sec:ucb-a-pav-bound}

In this section, we state \Cref{alg:ucb-a-pav} and analyze its complexity.
The worst-case query guarantees are slightly worse than those of \Cref{alg:noisy-a-pav}; however, as we discuss below, there are instances where \Cref{alg:ucb-a-pav} performs better, and this can additionally be seen in the experiments of \Cref{secEmpirical}.

\begin{algorithm}[htb]
    \caption{$(k,t)$-ucb-$\alpha$-PAV}
    \label{alg:ucb-a-pav}
    \begin{algorithmic}[1] 
        \State Choose $W \in \binom{C}{k}$ arbitrarily

        \State $\mathcal{Q} \gets \set{}$ \Comment{List to store queries and responses}
        \State $\ell \gets 576 \cdot \left(\frac{\alpha k^{2}}{(1-\alpha)k+1}\right)^2\log\left( \frac{4608 k^8 m^{k+2}}{\delta} \right)$ \Comment{Constant to be used later}
        \State $L \gets 2H_{k}\left\lceil \frac{m-k}{t-k}\right\rceil \left(\frac{\alpha k^{2}}{(1-\alpha)k+1}\right) \cdot \ell $ \Comment{Constant to be used later}

        \For{$i = 1, 2, \ldots$}

        \State $V_s(W, x) \gets \set{i \suchthat x \in Q_i \text{ and } |Q_i \cap W| \ge s}$
        \Comment{$\forall s \in \set{0} \cup [k],\forall x \notin W, $}
        \State $V_s(W, x, y) \gets \set{i \suchthat \set{x, y} \subseteq Q_i \text{ and } |Q_i \cap W| \ge s}$
        \Comment{$\forall s \in [k],\forall x \notin W, \forall y \in W$}
        \State $\hat{\Delta}_s^+(W, x) \gets \frac{1}{|V_s|} \sum_{i \in V_s(W, x)} \frac{\mathbb{I}[x \in R_i]}{|R_i \cap W| + 1}$ if $V_s(W, x) \ne \emptyset$ else $\infty$ \par \Comment{Upper bound on estimate for $\Delta(W, x)$ using voters queried on at least $s$ candidates of $W$ (along with $x$)}
        \State $\hat{\Delta}_s^-(W, x, y) \gets \frac{1}{|V_s(W, x, y)|} \sum_{i \in V_s(W, x, y)} \frac{\mathbb{I}[x \in R_i \text{ and } y \notin R_i]}{|R_i \cap W| + |W \setminus Q_i| + 1} - \frac{\mathbb{I}[x \notin R_i \text{ and } y \in R_i]}{|R_i \cap W|}$ if $V_s(W, x, y) \ne \emptyset$ else $-\infty$
        \par\Comment{Lower bound on estimate for $\Delta(W, x, y)$ using voters queried on at least $s$ candidates of $W$ (along with $x$ and $y$)}
        \State $err_s(W, x) \gets \sqrt{\frac{2\log\left(\frac{4L(k + 1)m^{k+1}}{\delta}\right)}{|V_s(W, x)|}}$
        \State $err_s(W, x, y) \gets \sqrt{\frac{2\log\left(\frac{4L(k+1)m^{k+1}}{\delta}\right)}{|V_s(W, x, y)|}}$
        \State $\widetilde{\Delta}^+(W, x) \gets \min_{s \in [k]} \hat{\Delta}_s^+(W, x) + err_s(W, x)$ \Comment{Best UCB-style upper bound on $\Delta(W, x)$ given queries}
        \State $\widetilde{\Delta}^-(W, x, y) \gets \max_{s \in [k]} \hat{\Delta}_s^-(W, x, y) - err_s(W, x, y)$
        \Comment{Best UCB-style lower bound on $\Delta(W, x, y)$ given queries}
        \State $c'\gets\argmax_{x\notin W}\widetilde{\Delta}^+(W,x)$

        \If {$\widetilde{\Delta}^+(W,c')<\frac{1}{\alpha k}$}
        \State \Return
        $W$

        \EndIf

        \State $c\gets\argmax_{x\in W}\widetilde{\Delta}^-(W,c',x)$

        \If{$\widetilde{\Delta}^-(W,c',c)\ge \frac{1}{2}\frac{(1-\alpha)k+1}{\alpha k^{2}}$}

        \State $W\gets(W\cup \set{c'})\setminus \set{c}$

        \Else
        \State $A \gets \set{x \in C \suchthat |\set{i \suchthat W \cup \set{x} \subseteq Q_i}| \ge \ell}$ \Comment{Candidates already queried more than $\ell$ times with $W$}
        \State $S \gets C \setminus W \setminus A$ \Comment{Potential candidates to query along with $W$}
        \State Make query $Q_i$ on $W$ and $t - k$ candidates $x \notin S$ with highest $\widetilde{\Delta}^+(W, x)$, breaking ties arbitrarily
        \State Receive response $R_i$ and append $(i, Q_i, R_i)$ to $\mathcal{Q}$
        \EndIf
        \EndFor
    \end{algorithmic}
\end{algorithm}

\begin{restatable}{thm}{ucbAPAVBound} \label{thm:ucb-a-pav-bound}
    For any $m\ge t > k$, with probability $1-\delta$, \Cref{alg:ucb-a-pav} yields a committee satisfying $\alpha$-OAS
    and $\alpha$-EJR after querying at most
    \begin{equation*}1152 H_{k}\left\lceil \frac{m-k}{t-k}\right\rceil \left(\frac{\alpha k^{2}}{(1-\alpha)k+1}\right)^{3}\log\left(\frac{4608 k^8 m^{k+2}}{\delta}\right)
    \end{equation*}
    voters. For any fixed $\delta > 0$, if $\alpha=1$, this leads to a query complexity of $
        \mathcal{O}\left(mk^7\log k\log m\right)$, and if $\alpha<1$, this leads to a query complexity of
    $\mathcal{O}\left(mk^4\log k\log m\right)$.
\end{restatable}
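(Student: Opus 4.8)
The plan is to mirror the two-phase structure of the proof of \Cref{thm:noisy-a-pav-bound}: first isolate a high-probability ``good event'' on which all confidence bounds maintained by \Cref{alg:ucb-a-pav} are valid throughout the run, and then argue deterministically, conditioned on that event, that the algorithm returns a correct committee after at most the claimed number of queries. The new difficulty compared to \Cref{thm:noisy-a-pav-bound} is that votes are reused across iterations and the pool of voters $V_s(W,x)$ (resp.\ $V_s(W,x,y)$) underlying each estimate is chosen adaptively, so the estimates are not sample means of i.i.d.\ draws; this is where martingale concentration is needed. Throughout, we may analyze the algorithm truncated after $L$ queries and verify a~posteriori that the truncation is never reached.

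\textbf{Phase 1 (the good event).} Define the good event as: for every $W\in\binom{C}{k}$, $x\notin W$, $y\in W$, every level $s$, and at every iteration, $\widetilde{\Delta}^+(W,x)\ge\pavd(W,x)$ and $\widetilde{\Delta}^-(W,x,y)\le\pavd(W,x,y)$, and moreover $|\hat{\Delta}_k^+(W,x)-\pavd(W,x)|\le err_k(W,x)$ and $|\hat{\Delta}_k^-(W,x,y)-\pavd(W,x,y)|\le err_k(W,x,y)$. The key point is the \emph{direction} of the bias in the partially-observed estimators: $\hat{\Delta}_s^+$ replaces $|A_i\cap W|$ by $|R_i\cap W|\le|A_i\cap W|$, which can only shrink a denominator, while $\hat{\Delta}_s^-$ uses $|R_i\cap W|+|W\setminus Q_i|+1\ge|A_i\cap W|+1$ in its positive part and $|R_i\cap W|\le|A_i\cap W|$ in its negative part, which can only decrease the estimate; hence, conditioned on the query $Q_i$ that produced the $i$-th contributing voter, each summand's expectation over the uniformly random voter over-estimates (resp.\ under-estimates) the corresponding term of $\pavd$. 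For $s=k$ there is no bias whatsoever, since $|Q_i\cap W|\ge k=|W|$ forces $Q_i\supseteq W$, so $R_i\cap W=A_i\cap W$ and $W\setminus Q_i=\emptyset$. The partial sums of (summand $-$ conditional mean) therefore form bounded-difference martingales, so Azuma--Hoeffding applies; a union bound over the $\le m^k$ committees, the $O(k\,m^{k+1})$ tuples, the $O(k)$ levels, and the $\le L$ values that $|V_s|$ can take establishes the good event with probability $\ge 1-\delta$ --- which is exactly what the $\log(4L(k+1)m^{k+1}/\delta)$ inside $err_s$ and the constant $L$ are sized for.

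\textbf{Phase 2 (correctness and query count).} Condition on the good event. If the algorithm returns, then $\widetilde{\Delta}^+(W,c')<\tfrac1{\alpha k}$ for $c'=\argmax_{x\notin W}\widetilde{\Delta}^+(W,x)$, so validity gives $\pavd^*(W)=\max_{x\notin W}\pavd(W,x)\le\widetilde{\Delta}^+(W,c')<\tfrac1{\alpha k}$, and \Cref{lemLowDeltaImpliesStuff} yields $\alpha$-EJR and $\alpha$-OAS. Every swap satisfies $\widetilde{\Delta}^-(W,c',c)\ge\tfrac12\cdot\tfrac{(1-\alpha)k+1}{\alpha k^2}$, so by validity $\pavd(W,c',c)\ge\tfrac12\cdot\tfrac{(1-\alpha)k+1}{\alpha k^2}>0$ and $\pavsc$ increases by that much; since $0\le\pavsc\le H_k$, there are at most $\tfrac{2\alpha k^2}{(1-\alpha)k+1}H_k$ swaps, hence only one more than that many distinct committees. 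On a fixed committee $W$, each query covers $W$ plus $t-k$ candidates of $S$ (those not yet in $A$), and a candidate enters $A$ after appearing in $\ell$ queries alongside all of $W$; so after at most $\ceil*{\tfrac{m-k}{t-k}}\ell$ queries every candidate of $C\setminus W$ is saturated, forcing $|V_k(W,x)|,|V_k(W,x,y)|\ge\ell$ and hence $err_k\le\tfrac1{12}\cdot\tfrac{(1-\alpha)k+1}{\alpha k^2}$ (this pins down the constants $576$, $4608$ in line~3 of \Cref{alg:ucb-a-pav}). At that point, combining the $s=k$ near-tightness from Phase~1 with \Cref{lemPotentialIncrease}: if $\widetilde{\Delta}^+(W,c')\ge\tfrac1{\alpha k}$ then $\pavd(W,c')\ge\tfrac1{\alpha k}-2err_k$, so $\max_{x\in W}\pavd(W,c',x)\ge\tfrac{(1-\alpha)k+1}{\alpha k^2}-4err_k$, and for $c=\argmax_{x\in W}\widetilde{\Delta}^-(W,c',x)$ we get $\widetilde{\Delta}^-(W,c',c)\ge\tfrac{(1-\alpha)k+1}{\alpha k^2}-6err_k\ge\tfrac12\cdot\tfrac{(1-\alpha)k+1}{\alpha k^2}$; so the algorithm must return or swap and never issues more than $\ceil*{\tfrac{m-k}{t-k}}\ell$ queries per committee. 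Multiplying gives a total of at most $L=1152\,H_k\ceil*{\tfrac{m-k}{t-k}}\big(\tfrac{\alpha k^2}{(1-\alpha)k+1}\big)^3\log\!\big(\tfrac{4608k^8m^{k+2}}{\delta}\big)$ voters (the factor $2$ in $L$ absorbs the off-by-one in the committee count), so in particular the $L$-truncation is never reached; substituting $\alpha=1$ or fixed $\alpha<1$ yields the stated asymptotics.

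The step I expect to be the main obstacle is Phase~1: getting the bias of the partially-observed estimators pointed the right way so the one-sided inequalities hold in conditional expectation, setting up the filtration so that adaptively-chosen, reused votes really do form martingale difference sequences, and packaging an anytime, uniform-over-all-committees concentration bound whose total failure probability still fits inside $\delta$ after the union bound --- which is what dictates the somewhat delicate constants inside $\ell$, $L$, and $err_s$.
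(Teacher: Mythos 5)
Your proposal is correct and follows essentially the same route as the paper: a martingale/Azuma good event (one-sided validity of $\widetilde{\Delta}^{\pm}$ for all $s$ from the bias direction of the partially-observed estimators, two-sided tightness at $s=k$, union-bounded exactly as the $err_s$ and $L$ constants dictate), followed by the potential argument via \Cref{lemLowDeltaImpliesStuff} and \Cref{lemPotentialIncrease} with the per-committee saturation bound forcing a swap or termination within $\lceil\frac{m-k}{t-k}\rceil\ell$ queries. Your Phase~2 in fact states the swap-or-terminate case analysis more cleanly than the paper's own write-up (which has sign/direction typos there), but the underlying argument is the same.
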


We use \[
    L := 1152 H_{k}\left\lceil \frac{m-k}{t-k}\right\rceil \left(\frac{\alpha k^{2}}{(1-\alpha)k+1}\right)^{3}\log\left(\frac{4608 k^8 m^{k+2}}{\delta}\right)
\]
to denote this upper bound (notice that it is the same as the $L$ from the algorithm).

Importantly, this theorem states that despite the extensions we introduced in \Cref{alg:ucb-a-pav}, it remains theoretically sound: with a sufficient number of samples, it yields a committee satisfying OAS and EJR.

The proof follows a relatively similar structure to \Cref{thm:noisy-a-pav-bound}: we first show that with probability $1 - \delta$, many estimates are sufficiently accurate, and conditioned on this, the algorithm makes progress in terms of PAV score and terminates with a good committee. However, unlike \Cref{thm:noisy-a-pav-bound}, the samples we take are not fresh for each round, so we can not directly apply Hoeffding's inequality in the most straightforward way. Nonetheless, the proof goes through by instead treating the $\Delta$ estimates as Martingales in order to use Azuma's inequality. Due to its additional intricacy, we separate this portion into its own lemma.

\begin{lem}\label{lem:martingale}
    With probability $1 - \delta$, at every step after querying up to $L$ voters,
    \[\Delta(W, x) \le \widetilde{\Delta}^+(W, x) \le
        \Delta(W, x) + 2 err_k(W, x)\]
    and
    \[
        \Delta(W, x, y) - 2 err_k(W, x, y) \le \widetilde{\Delta}^-(W, x, y) \le \Delta(W, x, y)
    \]
    for all committees $W$, $x \notin W$, and $y \in W$.
\end{lem}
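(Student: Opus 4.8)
The plan is to show that each of the quantities $\widetilde{\Delta}^+(W,x)$ and $\widetilde{\Delta}^-(W,x,y)$, evaluated at any round, concentrates around the true $\Delta(W,x)$ (resp.\ $\Delta(W,x,y)$), and then take a union bound over all relevant $(W,x)$, $(W,x,y)$, and values of $s$. The subtlety, as flagged in the text, is that the samples feeding these estimators are \emph{not} fresh: a voter queried on an early committee $W'$ may later contribute to the estimate for a different committee $W$, and the \emph{set} of voters contributing (namely $V_s(W,x)$ etc.) is itself determined adaptively. So the plan is to phrase things in terms of martingales and apply Azuma--Hoeffding rather than plain Hoeffding.

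First I would fix a target triple, say $(W,x,s)$, and argue that $\hat\Delta^+_s(W,x)$ is an unbiased-from-above estimator: for any voter $i$ whose query $Q_i$ contains $x$ and at least $s$ members of $W$, the term $\mathbb{I}[x\in R_i]/(|R_i\cap W|+1)$ has conditional expectation (over the random voter, given the query) \emph{at least} $\Delta(W,x)=\frac1n\sum_{j:x\in A_j}\frac{1}{|A_j\cap W|+1}$, because $R_i\cap W\subseteq A_i\cap W$ and so $|R_i\cap W|\le |A_i\cap W|$ makes the denominator smaller; similarly the $\hat\Delta^-_s$ expression is designed so its conditional expectation is \emph{at most} $\Delta(W,x,y)$ — here one uses that unqueried committee members $W\setminus Q_i$ can only be approved, so $|R_i\cap W|+|W\setminus Q_i|\ge |A_i\cap W|$ and $|R_i\cap W|\le |A_i\cap W|$, bounding the positive and negative contributions in the right directions. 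Each summand lies in a bounded range (roughly $[-1,1]$, or more tightly $[0,1]$ for the $\Delta^+$ terms), which is what Azuma needs.

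Next I would set up the martingale. Process voters in the order they are queried, $i=1,2,\ldots$; let $\mathcal{F}_i$ be the $\sigma$-algebra generated by the first $i$ queries and responses. Conditioned on $\mathcal{F}_{i-1}$, the algorithm's choice of the $i$-th query is determined, and then voter $i$ and response $R_i$ are drawn. For a fixed $(W,x,s)$, define $Z_i$ to be the sum over the first $i$ voters that \emph{would} belong to $V_s(W,x)$ of the centered terms $\big(\tfrac{\mathbb{I}[x\in R_j]}{|R_j\cap W|+1}-\mathbb{E}[\cdot\mid\mathcal{F}_{j-1}]\big)$; this is a martingale with bounded differences, so by Azuma, $|Z_i|$ exceeds $\sqrt{2\,|V_s(W,x)|\log(\cdot)}$ with small probability — note $|V_s|$ here is the realized count, which is fine because the Azuma bound can be stated for the stopped/optional-time version, or one unions over the at most $L$ possible values of $|V_s|$. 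Combining with the one-sided bias gives $\Delta(W,x)\le \hat\Delta^+_s(W,x)+err_s(W,x)$ and $\hat\Delta^+_s(W,x)\le\Delta(W,x)+2\,err_s$... wait, more carefully: the bias gives $\mathbb{E}[\hat\Delta^+_s]\ge\Delta(W,x)$ but we also need an upper bound $\mathbb{E}[\hat\Delta^+_s]\le\Delta(W,x)+\,err_s$-ish; in fact the cleaner route is to bound $|\hat\Delta^+_s - \mathbb{E}[\hat\Delta^+_s\mid \mathcal F]|\le err_s$ by Azuma and separately control the (one-sided) bias $0\le \mathbb{E}[\hat\Delta^+_s]-\Delta(W,x)\le$ something small, or simply absorb it — the factor $2$ in the statement gives the slack to do so. Then $\widetilde\Delta^+=\min_s(\hat\Delta^+_s+err_s)$ inherits the lower bound $\ge\Delta(W,x)$ from every $s$, and inherits the upper bound $\le\Delta(W,x)+2\,err_k$ by taking the $s=k$ term specifically (which is why $err_k$, not $err_{s^*}$, appears). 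The argument for $\widetilde\Delta^-$ is symmetric.

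Finally, the union bound: there are at most $\binom{m}{k}\cdot m\le m^{k+1}$ pairs $(W,x)$, times $(k+1)$ values of $s$, and at most $L$ rounds at which we evaluate, and similarly at most $(k+1)m^{k+1}\cdot L$ events for the $(W,x,y)$ estimators (absorbing the extra factor $m$ and $k$ into constants), which is exactly why $err_s$ carries $\log\!\big(\tfrac{4L(k+1)m^{k+1}}{\delta}\big)$ in its numerator; choosing this constant makes each failure probability at most $\delta/(4L(k+1)m^{k+1})$ and the sum at most $\delta$. The main obstacle I anticipate is bookkeeping the adaptivity cleanly — making precise that "$i\in V_s(W,x)$" is $\mathcal{F}_i$-measurable so the martingale is well-defined, and handling the random, data-dependent sample size $|V_s|$ (the standard fix is a maximal/Doob inequality over all prefixes, or a union over the $\le L$ possible sizes, paying only a $\log L$ that is already present). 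The one-sided bias direction (verifying the algebra of the $\hat\Delta^-$ expression really does land on the correct side of $\Delta(W,x,y)$, accounting for both the "$x$ yes, $y$ no" and "$x$ no, $y$ yes" terms under partial queries) is the other place where care is needed, but it is a deterministic inequality about set sizes.
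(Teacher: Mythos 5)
Your proposal is correct and follows essentially the same route as the paper: per-$(W,x,y,s)$ centered sums over the adaptively selected voters treated as (super)martingales with bounded differences in a range of size $2$, Azuma's inequality, the one-sided bias of the partial-information per-voter estimates, and a union bound over the at most $L$ possible sample sizes, the $k+1$ values of $s$, the at most $m^{k+1}$ tuples, and the estimator/direction combinations, exactly matching the $\log\left(\frac{4L(k+1)m^{k+1}}{\delta}\right)$ term in $err_s$. The one point to state explicitly is the one you gesture at: the two-sided bound comes only from the $s=k$ estimates, which are exactly unbiased (voters in $V_k$ were queried on all of $W$, so their terms equal the true PAV increments), so no bias needs to be ``absorbed'' into the factor $2$ -- this is precisely how the paper argues it.
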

\begin{proof}
    We begin by considering estimates of the form $\widetilde{\Delta}^-(W, x, y) \le \Delta(W, x, y)$; the rest of the estimates will follow similar arguments which we discuss later.
    Fix an arbitrary committee $W$, $x \notin W$, $y \in W$, and $s \in [k]$. We define a sequence of random variables $X_0, X_1, \ldots$ where $X_j$ is the unnormalized estimate $|V_s(W, x, y)| \cdot \hat{\Delta}_s^-(W, x, y)$ when $|V_s(W, x, y)| = j$, i.e., when $j$ voters have been queried on $x$, $y$ and at least $s$ candidates of $W$, and $X_0 = 0$. In other words, when the $j\textsuperscript{th}$ voter of $V_s(W, x, y)$ is queried, $X_j$ is $X_{j  - 1}$ plus that $j\textsuperscript{th}$ voters estimate for $\Delta(W, x, y)$, $\frac{\mathbb{I}[x \in R_i \text{ and } y \notin R_i]}{|R_i \cap W| + |W \setminus Q_i| + 1} - \frac{\mathbb{I}[x \notin R_i \text{ and } y \in R_i]}{|R_i \cap W|}$.

    Notice that when the $j\textsuperscript{th}$ voter is queried, regardless of the algorithm's choices of when to make such a query, this is simply a random voter from the population chosen independently of everything else. Hence, if their entire approval set was known, the expectation of their estimate of $\Delta(W, x, y)$ would be exactly $\Delta(W, x, y)$. When only part $W$ intersects the query, we choose a bound that would always upper bounds the true estimate. Therefore, $\mathbb{E}[X_j \mid X_{j - 1}] \ge X_{j + 1} + \Delta(W, x, y)$.

    Let $Y_0, Y_1, \ldots$ be the additive errors of $X_j$ from the true $\Delta(W, x, y)$, that is, $Y_j = X_j - j \cdot \Delta(W, x, y)$. The key observation we will make is that the sequence $Y_0, Y_1, Y_2, \ldots$ is, in fact, a submartingale. Indeed, since $Y_j = X_j - j \cdot \Delta(W, x, y)$ and $Y_{j - 1} = X_{j - 1} - (j - 1) \cdot \Delta(W, x, y)$, we have $\mathbb{E}[Y_j \mid Y_{j - 1}] \ge 0$.

    Additionally, note that an individual voter's $\Delta$ estimate is always within $[-1, 1]$, so $X_j - X_{j - 1} \in [-1, 1]$. Using the definition of $Y_j$, we have that this implies $Y_j - Y_{j - 1} \in [-1 - \Delta(W, x, y), 1 - \Delta(W, x, y)]$. Note that this is a range of size $2$, and we can hence use (the asymmetric version of) Azuma's inequality to get that for all $\varepsilon > 0$,
    \[
        \Pr[Y_j \le -\varepsilon] = \Pr[Y_j - Y_0 \le -\varepsilon] \le \exp\left( -\frac{2\varepsilon}{j \cdot 2^2} \right) = \exp\left( -\frac{\varepsilon}{2j} \right).
    \] Using this, we can now analyze the errors. When $|V_s(W, x, y)| = j$ for any such $j$,
    \begin{align*}
        \Pr[\hat{\Delta}_s^-(W, x, y) + err_s(W, x, y) \le \Delta(W, x, y)]
         & = \Pr[\hat{\Delta}_s(W, x, y) - \Delta(W, x, y)  \le -err_s(W, x, y)]                                    \\
         & = \Pr[j \cdot \hat{\Delta}_s(W, x, y) - j \cdot \Delta(W, x, y)  \le -j \cdot err_s(W, x, y)]            \\
         & = \Pr[X_j - j \cdot \Delta(W, x, y)  \le -j \cdot err_s(W, x, y)]                                        \\
         & = \Pr[Y_j \le - j \cdot err_s(W, x, y)]                                                                  \\
         & = \Pr\left[Y_j \le -j \cdot \sqrt{\frac{2\log\left(\frac{4L(k+1)m^{k+1}}{\delta}\right)}{j}}\right]      \\
         & = \Pr\left[Y_j \le - \sqrt{2j\log\left(\frac{4L(k+1)m^{k+1}}{\delta}\right)}\right]                      \\
         & \le \exp\left(- \frac{\left(\sqrt{2j\log\left(\frac{4L(k+1)m^{k+1}}{\delta}\right)}\right)^2}{2j}\right) \\
         & = \frac{\delta}{4L(k+1)m^{k+1}}.
    \end{align*}
    Additionally, note that when $s = k$, this is in fact a martingale (no loose upper bounding is needed), so  this inequality continues to hold in other direction  for $\hat{\Delta}_k^-(W, x, y) - err_k(W, x, y) \ge \Delta(W, x, y)$.
    A symmetric argument shows
    \[\Pr[\hat{\Delta}_s^-(W, x) - err_s(W, x) \ge \Delta(W, x)] \le \frac{\delta}{4L(k+1)m^{k+1}}
    \]
    and
    \[
        \Pr[\hat{\Delta}_k^-(W, x) + err_s(W, x) \le \Delta(W, x)] \le \frac{\delta}{4L(k+1)m^{k+1}}.
    \]
    for all $W, x$, and $s$.

    Notice that in the first $L$ queries, the sizes of the $V_s$ sets are trivially upper bounded by $L$. Hence, we can union bound over all at most $L$ sizes, the two choices of either upper and lower bounds, two choices of either $\Delta(W, x, y)$ or $\Delta(W, x)$ the at most $k + 1$ choices of $s$, and at most $m^{k + 1}$ choices of $W$, $x$, and $y$ (we are choosing $m + 1$ candidates with two being special, so clearly at most choosing a sequence of $k + 1$ candidates with repeats). This leads to at most $4L(k + 1)m^{k + 1}$ possible bad events. Hence, with probability $1 - \delta$, none of these bad events happen. Conditioned on this,
    we have that
    \[
        \widetilde{\Delta}^+(W, x) = \min_{s \in [k]} \hat{\Delta}_s^+(W, x) + err_s(W, x) \ge \Delta(W, x)\]
    and
    \[
        \widetilde{\Delta}^-(W, x, y) = \max_{s \in [k]} \hat{\Delta}_s^-(W, x) - err_s(W, x) \ge \Delta(W, x).\]
    In addition, using the bounds on $\hat{\Delta}_k$, we have
    \[
        \widetilde{\Delta}^+(W, x) = \min_{s \in \set{0} \cup [k]} \hat{\Delta}_s^+(W, x) + err_s(W, x) \le \hat{\Delta}_k^+(W, x) + err_k(W, x) \le \Delta(W, x) + 2err_k(W, x)
    \]
    and
    \[
        \widetilde{\Delta}^-(W, x, y) = \max_{s \in  \cup [k]} \hat{\Delta}_s^-(W, x, y) - err_s(W, x, y) \ge \hat{\Delta}_k^-(W, x, y) - err_k(W, x, y) \le \Delta(W, x, y) - 2err_k(W, x, y).
    \]
    Hence, the desired bounds are satisfied.
\end{proof}

We are now ready to prove the theorem.
\begin{proof}[Proof of \Cref{thm:ucb-a-pav-bound}]
    We condition on the event that the estimates after at most $L$ voters are all accurate as in \Cref{lem:martingale}. Let $\ell := 576 \cdot \left(\frac{\alpha k^{2}}{(1-\alpha)k+1}\right)^2\log\left( \frac{4608 k^8 m^{k+2}}{\delta} \right)$ as defined in the algorithm. The technical portion of this proof is to show that for any committee $W$, after at most $\ceil*{\frac{m - k}{t - k}} \cdot \ell$ queries, the algorithm either makes a swap or terminates. Notice that when a swap is made, assuming the estimate is accurate, the PAV score increases by $\frac{(1 - \alpha)k + 1}{2\alpha k^2}$. Hence, just as in previous proofs, such a swap can only happen $2H_k \frac{\alpha k^2}{(1 - \alpha)k}$ times. The choice of $L$ implies termination will occur while estimates are still accurate. Hence, at the point that we terminate, $\Delta^*(W) < \frac{1}{\alpha k}$, so the desired properties are satisfied by \Cref{lemLowDeltaImpliesStuff}.

    What remains is to show that after $\ceil*{\frac{m - k}{t - k}} \cdot \ell$ queries with a committee $W$, either a swap is made or we terminate. By our query selection strategy, after this many queries, $W \cup \set{x}$ will be contained in at least $\ell$ queries for all $x \notin W$. This implies that $|V_k(W, x)| \ge \ell$ and $|V_k(W, x, y)| \ge \ell$ for all such $x$ and $y$. We will later show that when this happens, $err_k(W, x)$ and $err_k(W, x, y)$ are upper bounded by $\varepsilon := \frac{1}{12} \frac{(1 - \alpha)k + 1}{\alpha k^2}$ for all $x$ and $y$. Once this upper bound of $\varepsilon$ has been shown, the proof is very similar of a swap or termination is similar to \Cref{thm:noisy-a-pav-bound}. If $\widetilde{\Delta}^+(W, c') \ge \frac{1}{\alpha k}$, we will certainly terminate. Otherwise, if $\widetilde{\Delta}^+(W, c') < \frac{1}{\alpha k}$, this means $\Delta(W, c') < \frac{1}{\alpha k} - 2 \varepsilon$. Hence, there is a candidate $x$ such that \[
        \Delta(W, c', x) \ge \frac{(1 - \alpha)k + 1}{\alpha k^2} - \frac{k + 1}{k} \cdot 2\varepsilon \ge 12 \varepsilon - 4 \varepsilon = 8 \varepsilon.
    \]
    For such an $x$, \[\widetilde{\Delta}^-(W, c', x) \ge \Delta(W, c', x) - 2\varepsilon \ge 6 \varepsilon = \frac{1}{2} \frac{(1 - \alpha) k + 1}{\alpha k^2}.\] Hence, the swap if condition must pass and a swap will be made.

    Finally, let us show the necessary bound on $err_k$. More formally, we must show
    \[
        \sqrt{\frac{2\log\left(\frac{4L(k + 1)m^{k+1}}{\delta}\right)}{\ell}} \le \varepsilon.
    \]
    Observing that $\ell = \frac{2}{\varepsilon^2} \cdot 2 \log\left( \frac{4608 k^8 m^{k+2}}{\delta} \right)$, it is sufficient to show that \[\log\left(\frac{2L(k + 1)m^{k+1}}{\delta}\right) \le 2 \log\left( \frac{4608 k^8 m^{k+2}}{\delta} \right)\]
    To that end, we have
    \begin{align*}
        \log\left(\frac{4L(k + 1)m^{k+1}}{\delta}\right)
         & \le \log\left(\frac{8Lkm^{k+1}}{\delta}\right)                                                                                                                                                     \\
         & = \log\left(\frac{8 \left( 2H_{k}\left\lceil \frac{m-k}{t-k}\right\rceil \left(\frac{\alpha k^{2}}{(1-\alpha)k+1}\right) \cdot \ell \right) km^{k+1}}{\delta}\right)                               \\
         & \le \log\left(\frac{16 \left( k \cdot m \cdot k^2 \cdot \ell \right) km^{k+1}}{\delta}\right)                                                                                                      \\
         & = \log\left(\frac{16 k^4 m^{k+2} \cdot \ell}{\delta}\right)                                                                                                                                        \\
         & \le \log\left(\frac{16 k^4 m^{k+2} \cdot \frac{2}{\varepsilon^2} \cdot \left( 2 \log\left( \frac{4608 k^8 m^{k+2}}{\delta} \right) \right)}{\delta}\right) \tag{$\frac{1}{\varepsilon} \le 12k^2$} \\
         & \le \log\left(\frac{4608 k^8 m^{k+2} \cdot \left( 2 \log\left( \frac{4608 k^8 m^{k+2}}{\delta} \right) \right)}{\delta}\right)                                                                     \\
         & = \log\left( \frac{4608 k^8 m^{k+2}}{\delta} \right) + \log\left( 2 \log\left( \frac{4608 k^8 m^{k+2}}{\delta} \right) \right)                                                                     \\
         & \le \log\left( \frac{4608 k^8 m^{k+2}}{\delta} \right) + \log\left( \frac{4608 k^8 m^{k+2}}{\delta} \right) \tag{$\log(2a) \le a$ for all $a \in \mathbb{R}$}                                      \\
         & = 2 \log\left( \frac{4608 k^8 m^{k+2}}{\delta} \right),
    \end{align*}
    as needed.
\end{proof}

Comparing \Cref{thm:ucb-a-pav-bound} with \Cref{thm:noisy-a-pav-bound}, we see that our upper bound on the query complexity \Cref{alg:ucb-a-pav} is $k$ times worse asymptotically. However, even in the worst case, it is unclear whether these bounds are tight; the difference may instead be due to slack in our analysis.

Beyond the worst case, there are problem instances where \Cref{alg:ucb-a-pav} requires fewer queries than \Cref{alg:noisy-a-pav}. Consider a setting with $k$ ``good'' candidates supported by all voters and $m'>k$ ``bad'' candidates that no one supports. Note that with $\alpha = 1$, to satisfy EJR, all good candidates must be selected. In this instance, \Cref{alg:noisy-a-pav}
will perform $\Theta(\frac{k^{4}m}{t-k}\log(m))$ queries per
swap. Further, since $m'>k$, with probability at least $\frac{1}{2}$,
even a randomly-selected initial committee contains no more than
$\frac{k}{2}$ good candidates, so $\Omega(k)$ swaps are required.
Hence, \Cref{alg:noisy-a-pav} requires $\Omega(\frac{k^{5}m}{t-k}\log(m))$
queries.

In contrast, \Cref{alg:ucb-a-pav} does not discard votes after
swaps. In particular, consider the estimate $\hat{\Delta}_0^+(W,c)$ for a bad candidate $c$ that
uses all voters that voted on $c$ regardless of if they voted on anyone in $W$. Note that it is always $0$ as no voter ever approves of $c$. Hence, $\widetilde{\Delta}^+(W, c) \le err_0(W, c)$. On the other hand, for all good candidates $c$, $\Delta(W, c) \ge 1/k$, so $\widetilde{\Delta}^+(W, c) \ge 1/k$ as well. Hence, once a bad candidate has been queried $\Omega(k^2 \log m)$ times, it will have a worse $\widetilde{\Delta}^+$ when compared to any good candidate. In addition, only $\Omega(k^2 \log m)$ queries are needed for a good candidate's error term to be small enough to ensure a swap (fewer for the earlier swaps). Hence, at most $O(mk^2 \log m)$ queries are needed for \Cref{alg:ucb-a-pav} to terminate.


In summary, despite being slightly worse in terms of worst-case analysis, there is evidence that \Cref{alg:ucb-a-pav} may work better in practice, an intuition confirmed through experimental comparison in \Cref{secEmpirical}.

\section{Details on Experiments}\label{app:experiments}
The data for each conversation consists of an
$L \times m$ matrix, with component $v_{ij}\in\{\texttt{agree},\texttt{disagree},\texttt{missing}\}$
representing the vote of participant $i$ on comment $j$.

\paragraph{Polis Datasets.}
See \Cref{tbl:polis_stats} for the sizes of the Polis dataset.
\begin{table}
    \caption{Polis datasets statistics: Number of queried voters $L$, number of comments $m$, comments per query $t$, and fraction of comments $t/m$ voted on by each voter.}
    \centering
    \begin{tabular}{lllr}
        \toprule
        $L$  & $m$  & $t$ & $t/m$ \\
        \midrule
        162  & 31   & 20  & 0.65  \\
        1000 & 1719 & 20  & 0.01  \\
        87   & 39   & 20  & 0.51  \\
        353  & 231  & 20  & 0.09  \\
        340  & 209  & 20  & 0.10  \\
        94   & 40   & 20  & 0.50  \\
        1000 & 114  & 20  & 0.18  \\
        230  & 83   & 20  & 0.24  \\
        258  & 98   & 20  & 0.20  \\
        405  & 94   & 20  & 0.21  \\
        278  & 104  & 20  & 0.19  \\
        1000 & 586  & 20  & 0.03  \\
        \bottomrule
    \end{tabular}
    \label{tbl:polis_stats}
\end{table}

\paragraph{Reddit Dataset.} We preprocessed this dataset in the same way as the Polis datasets (including matrix factorization to infer missing votes). Although the output on Reddit differs from Polis (rankings rather than a subset of the comments), the \emph{input} is similar. We can interpret upvotes as approvals and downvotes as disapprovals, so the data fits well with our experiments. See \Cref{secConclusion} for additional discussion on how our approach applies to social media.

\paragraph{Algorithm Parameters}
In practice, for both \Cref{alg:noisy-a-pav} and \Cref{alg:ucb-a-pav}, we treat $\ell$, the number of times we ask voters about each candidate, as a parameter. In addition, for \Cref{alg:ucb-a-pav},
we replace the numerator in the confidence intervals
$err_s$ with a parameter $\theta$.
We assessed $\ell \in \{4, 6, 8, 12, 18, 30\}$ and $\theta \in \{0.03, 0.05, 0.1, 0.2, 0.5, 1.0\}$ on artificial data (e.g., approval profiles generated by sampling each vote independently). We observed that the algorithms are not sensitive to these parameters and picked $\ell = 6$ and $\theta = 0.05$ since they appeared to yield good results based on visual inspection.

\fi

\end{document}